\def\arxiv{0}
\def\newarxiv{1} 
\def\elmerge{{\sf elementary merges}}
\def\conmerge{{\sf concatenation merges}}
\def\cymerge{{\sf cycle merges}}
\def\elmergee{{\sf elementary merge}}
\def\conmergee{{\sf concatenation merge}}
\def\cymergee{{\sf cycle merge}}
\def\heads{{\tt heads}}
\def\tails{{\tt tails}}
    \newtheorem{theorem}{Theorem}
    \newtheorem{corollary}{Corollary}
    \newtheorem{lemma}{Lemma}
    \newtheorem{definition}{Definition}
    \newtheorem{remark}{Remark}
    \newtheorem{claim}{Claim}
    \providecommand{\keywords}[1]
    {
      \small	
      \textbf{\textit{Keywords---}} #1
    }
\newcommand{\Exp}{\mathbb{E}}
\newcommand{\E}{\Exp}
\newcommand{\poly}{{\rm poly}}
\newcommand{\eqdef}{\stackrel{\rm def}{=}}
\def\congest{{\sf CONGEST}}
\def\pram{{\sf PRAM}}
\def\cpram{{\sf CREW-PRAM}}
\def\ccpram{{\sf CRCW-PRAM}}
\def\epram{{\sf EREW-PRAM}}
\def\dirac{{\sc Dirac}}
\def\ore{{\sc Ore}}
\def\rk{{\sc RK}}
    \title{Distributed \congest\ Algorithm for Finding Hamiltonian Paths in Dirac Graphs and Generalizations}
    \titlerunning{Dist. \congest\ Alg. for Finding Ham. Paths in Dirac Graphs and Generalizations} 
    \author{Noy Biton\thanks{Efi Arazi School of Computer Science, Reichman University, Israel. Email: {\tt Noy.Biton@post.idc.ac.il}} \and Reut Levi\thanks{Efi Arazi School of Computer Science, Reichman University, Israel. Email: {\tt reut.levi1@idc.ac.il}} \and Moti Medina\thanks{Faculty of Engineering, Bar-Ilan University, Ramat Gan, Israel. Email: {\tt moti.medina@biu.ac.il}}}
    \date{}
    \author{Noy Biton}{Efi Arazi School of Computer Science, Reichman University, Israel}{Noy.Biton@post.idc.ac.il}{}{The author was supported by the Israel Science Foundation under Grant 1867/20.}
    \author{Reut Levi}{Efi Arazi School of Computer Science, Reichman University, Israel}{reut.levi1@idc.ac.il}{https://orcid.org/0000-0003-3167-1766}{The author was supported by the Israel Science Foundation under Grant 1867/20.}
    \author{Moti Medina}{Faculty of Engineering, Bar-Ilan University, Ramat Gan, Israel}{moti.medina@biu.ac.il}{https://orcid.org/0000-0002-5572-3754}{The author was supported by the Israel Science Foundation under Grant 867/19.}
    \authorrunning{N. Biton, R.Levi and M. Medina}
    \keywords{the \congest\ model,  Hamiltonian Path, Hamiltonian Cycle,  Dirac graphs, Ore graphs, graph-algorithms}
\begin{document}

\ifnum\arxiv=1
\title{Distributed \congest\ Algorithm for Finding Hamiltonian Paths in Dirac Graphs and Generalizations}
\fi
\maketitle
\begin{abstract}
    We study the problem of finding a Hamiltonian  cycle under the promise that the input graph has a minimum degree of at least $n/2$, where $n$ denotes the number of vertices in the graph. The classical theorem of Dirac states that such graphs (a.k.a. Dirac graphs) are Hamiltonian, i.e., contain a Hamiltonian cycle.  Moreover, finding a Hamiltonian  cycle in Dirac graphs can be done in polynomial time in the classical centralized model.

    This paper presents a randomized distributed \congest\ algorithm that finds w.h.p. a Hamiltonian cycle (as well as maximum matching) within $O(\log n)$ rounds under the promise that the input graph is a Dirac graph. This upper bound is in contrast to general graphs in which both the decision and search variants of Hamiltonicity require $\tilde{\Omega}(n^2)$ rounds, as shown by  Bachrach et al. [PODC'19].

    In addition, we consider two generalizations of Dirac graphs: Ore graphs and Rahman-Kaykobad graphs [IPL'05]. In Ore graphs, the sum of the degrees of every pair of non-adjacent vertices is at least $n$, and in Rahman-Kaykobad graphs, the sum of the degrees of every pair of non-adjacent vertices plus their distance is at least $n+1$. We show how our algorithm for Dirac graphs can be adapted to work for these more general families of graphs.
\end{abstract}
\if\arxiv=1
    \keywords{the \congest\ model; Hamiltonian Path; Hamiltonian Cycle; Dirac graphs; Ore graphs; graph-algorithms}
\fi
\ifnum\arxiv=0
    \maketitle
\fi
\section{Introduction}
The Hamiltonian path {\color{black} and Hamiltonian cycle }problems are fundamental in computer science and appeared in Karp's 21 NP-complete problems~\cite{K72}. A Hamiltonian path is a path that visits every vertex in the graph exactly once {\color{black}and a Hamiltonian cycle is a cycle that visits every vertex in the graph exactly once. We say that a graph is \emph{Hamiltonian} if it contains a Hamiltonian cycle.}

While the problem is hard in general (assuming ${\rm P \neq NP}$), for some classes of graphs, it is guaranteed that all their members are Hamiltonian. 
In particular, the classical theorem of Dirac states that every graph in which the minimum degree is at least $n/2$ is Hamiltonian (we refer to graphs that satisfy this condition as \dirac\ graphs). This condition is tight in the sense that if we are only guaranteed that the minimum degree is at least $\alpha n$ for any $0 < \alpha < 1/2$, then the problem remains {\rm NP}-complete~\cite{DHK93}.

In FOCS’87 Goldberg proposed the question of whether there is an NC-algorithm for finding a Hamiltonian cycle in \dirac\ graphs.~\footnote{See~\cite{DHK93} and~\cite{sarkozy2009fast} for more details.}
This question was answered affirmatively by Dahlhaus et al.~\cite{dahlhaus1988optimal, DHK93}, who gave a fast parallel algorithm on \cpram\ to find a Hamiltonian cycle in Dirac graphs. Their algorithm works in $O(\log^4n)$ parallel time and uses $O(n+m)$ number of processors where $m$ denotes the number of edges of the graphs. 

Aside from the theoretical appeal of the problem, finding a Hamiltonian cycle in a graph also provides us with a maximum matching of the graph (and even perfect matching when $n$ is even). 
Therefore, it is natural to ask whether the algorithm of Dahlhaus et al.~\cite{DHK93} can be translated to the \congest\ model. 
To this end, one may attempt to use the \pram\ simulation of Lotker, Patt-Shamir, and Peleg~\cite{lotker2006distributed} for diameter-$2$ graphs (which applies for \dirac\ graphs).
However, this simulation is only valid when the number of processors is linear in the number of vertices in the graph.  
Another attempt is to use the more general transformation of Ghaffari and Li~\cite{GJ18} that provide a randomized \congest\ algorithm that simulates any \ccpram\ algorithm that uses $2m$ processors, runs in $T$ parallel rounds, and operates on the input graph $G$ that is stored in the \pram's shared memory. 
The round complexity of the attained \congest\  algorithm is $T\cdot \tau_{\rm mix}(G)\cdot 2^{O(\sqrt{\log n})}$, where $\tau_{\rm mix}(G)$ is the mixing-time of $G$. Thus even for constant mixing time, this yields a simulation of the algorithm of~\cite{DHK93} in \congest\ with $2^{O(\sqrt{\log n})}$ rounds. Moreover, since the mixing time of Dirac's graph can be $\Theta(n)$~\footnote{Consider a Dirac graph, over $n$ vertices, which is composed of two cliques of size $n/2$ with a perfect matching between the cliques.}, the round complexity of this simulation can be super-linear in $n$.
Consequently, our goal is to improve upon this round complexity by directly designing an algorithm for the \congest\ model. Indeed we provide an algorithm with an exponential improvement in the round complexity. Specifically, our algorithm for finding a Hamiltonian cycle in Dirac graphs runs in $O(\log n)$ rounds. When the algorithm terminates, each vertex outputs the identifier of the vertex that is before it and the vertex that is after it on the cycle~\footnote{We note that although the input graph is undirected, the outputs of the vertices yield an oriented Hamiltonian cycle (or path).}.  

In the \congest\ model, it is standard to assume that the processors have unbounded computational power. Therefore, one may wonder whether finding a Hamiltonian cycle in general graphs in $o(n^2)$ rounds is possible.  
It was recently shown by Bachrach et al.~\cite{BCHDELP19} that even the problem of testing Hamiltonicity in the \congest\  model~\cite{P00} requires $\tilde{\Omega}(n^2)$ rounds.
Therefore, it is natural to focus on restricted families of graphs such as \dirac\ graphs and their generalizations.  
%
%
%
Since the classical result of Dirac, there have been many generalizations of Dirac's theorem (see~\cite{H13} and references therein), e.g., graph families that are defined by sufficient conditions on degrees, neighborhoods, and other graph parameters~\cite{O60,bermond1976hamiltonian,bondy1971large,LINIAL1976297,bondy1976method,bondy1980longest,fournier1985conjecture,flandrin1991hamiltonism}. The first important generalization of Dirac's theorem is by Ore~\cite{O60} who proved that every graph in which the sum of degrees of each pair of non-adjacent vertices is at least $n$ is Hamiltonian. 
A more recent generalization, which also generalizes Ore's theorem, and allows graphs with less edges, is by Rahman and Kaykobad~\cite{RK05} who proved that every graph in which the sum of degrees of each pair of non-adjacent vertices plus their distance is at least $n+1$ has a Hamiltonian path. 
We refer to graphs that satisfy these conditions as \ore\ graphs and \rk\ graphs, respectively. 
We prove that our distributed \congest\ algorithm and its analysis can be adapted (without changing the round complexity asymptotically) for these generalizations of \dirac\ graphs as well.  
\paragraph*{Our Results. }
Our main result is stated in the following theorem. 
\begin{restatable}{theorem}{mainham}
\label{thm:ham}
There exists a distributed algorithm for computing a Hamiltonian {\color{black} cycle} in \dirac\ graphs that runs in
$O(\log n)$ rounds in the \congest\ model. The algorithm succeeds with high probability.~\footnote{We say that an event occurs \emph{with high probability (w.h.p.)} if it occurs with probability at least $1-1/\poly(n)$.} 
\end{restatable}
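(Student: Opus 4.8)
The plan is to build the Hamiltonian cycle by a geometric path-merging process: the algorithm maintains a partition of the vertex set into vertex-disjoint simple paths and, in each phase, merges paths so that their number drops by a constant factor. Starting from the trivial partition into $n$ singleton paths (or from a quickly computed maximal matching), a single path remains after $O(\log n)$ phases, and this Hamiltonian path is then closed into a cycle. Throughout, I would exploit that \dirac\ graphs have diameter $2$: any two vertices are either adjacent or joined by a common neighbor, so in principle a merge decision between two paths needs only $O(1)$ rounds and the exchange of $O(\log n)$-bit identifiers. Each path carries an orientation, distinguishing a \heads\ endpoint from a \tails\ endpoint, and every vertex knows whether it is an endpoint or internal, together with its path-neighbors.

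The core is the merging phase. First I would pair the current paths at random and mark each pair as \link. For such a pair the cheapest progress is a \conmergee: if an endpoint of one path is adjacent to an endpoint of its partner, concatenate the two paths along that edge. When no endpoint-to-endpoint edge is available I would fall back on P\'osa-style rotations, realized as an \elmergee: an endpoint whose neighbors all lie in path interiors is rotated to expose a fresh endpoint, and by the \dirac\ degree bound a short sequence of such rotations produces an endpoint adjacent to a partner's endpoint. Rotations can accidentally close a path into a cycle; a \cymergee\ reopens such a cycle using an edge leaving it (again guaranteed by minimum degree $\ge n/2$) and splices it into a neighboring path. To organize the analysis I classify paths as \soc\ or \intr: a path is \soc\ when its endpoints have many neighbors among the endpoints of other paths, so a random partner admits a \conmergee\ with constant probability, and \intr\ otherwise; the \dirac\ condition forces an \intr\ endpoint to have most of its $\ge n/2$ neighbors in path interiors, which is exactly the regime in which \elmerge\ succeed.

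The analysis then shows that in each phase a constant fraction of paths merge with high probability: \soc\ paths merge through cheap concatenations, while \intr\ paths merge through rotations, and the two regimes together guarantee progress regardless of the current endpoint configuration. A union bound over the $O(\log n)$ phases yields the high-probability guarantee of Theorem~\ref{thm:ham}. Each phase is implemented in $O(1)$ \congest\ rounds because all relevant communication is between vertices at distance at most $2$ and carries only identifiers; the random pairing is also what keeps the load on the shared common-neighbor relays bounded, avoiding the congestion that would otherwise arise from routing many messages through the same intermediary.

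The step I expect to be the main obstacle is controlling the \intr\ paths and their rotations. I must guarantee that a \emph{bounded} number of rotations always exposes a mergeable endpoint, that simultaneous rotations of different paths do not conflict over the congested relay links, and that the number of spurious cycles created stays small enough for the \cymerge\ to absorb them without cascading. Once this is in hand, the final closing step is easy: for the unique remaining Hamiltonian path, the standard \dirac\ argument (a pigeonhole on the along-path neighbors of the two endpoints) produces two adjacent endpoints with certainty, closing the path into the desired Hamiltonian cycle.
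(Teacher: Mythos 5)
Your proposal has two genuine gaps, both at steps you would need for the stated $O(\log n)$ bound. The first is the fallback for \intr\ paths: you rely on ``P\'osa-style rotations'' and yourself flag that you cannot yet guarantee a \emph{bounded} number of rotations exposes a mergeable endpoint. This is not a technicality --- rotations are inherently sequential (each one changes the endpoint and hence the set of admissible next rotations), so a phase built on them is not an $O(1)$-round phase, and no bound on the rotation count follows from the \dirac\ condition alone. The paper avoids rotations entirely: its key lemma (an extension of Lemma~5.2.5 of Dahlhaus et al.~\cite{DHK93}) shows that for a \soc\ path $P=(u,\ldots,v)$ and \emph{any} other path $Q$, the number of positions at which $P$ can be inserted into $Q$ (elementary merge) or concatenated to $Q$ is at least $d_Q(u)+d_Q(v)-|V(Q)|+1$; summing over the path cover $\mathcal{P}$ and using $d(u)+d(v)\geq n$ gives at least $|\mathcal{P}|$ usable positions in total. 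Hence a single \emph{random reservation} (each edge/endpoint of every path reserves itself to a uniformly random adjacent path endpoint) hits a useful position with constant probability --- an elementary merge of the whole path $P$ into an edge of $Q$ replaces your rotation sequence, in one shot and with exclusivity that also resolves the relay-congestion and simultaneous-conflict issues you list as open (the paper additionally uses one fair coin per path, merging $P$ into $Q$ only if $P$ tossed \tails{} and $Q$ tossed \heads{}, to kill merge cycles and long cascades).

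The second gap is the probabilistic accounting. You claim each phase merges a constant fraction of paths \emph{with high probability} and then take a union bound over $O(\log n)$ phases. Neither half survives scrutiny: the per-phase guarantee is only a constant-probability (or constant-fraction-in-expectation) event --- and it cannot be boosted to w.h.p.\ once the number of surviving paths is small (with $O(1)$ paths left there is no concentration to invoke) --- and, moreover, the successes of different phases are \emph{dependent} random variables, since the path cover at phase $i+1$ is a function of the randomness of phase $i$, so independence-based bounds do not apply. The paper's fix is to track expectations instead: $\Exp(|\mathcal{P}_{i+1}|)\leq(1-1/144)\cdot\Exp(|\mathcal{P}_i|)$, so after $\ell=\Theta(\log n)$ iterations $\Exp(|\mathcal{P}_\ell|)\leq 1/n$, and Markov's inequality gives $\Pr(|\mathcal{P}_\ell|>1)\leq 1/n$. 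Your final step (closing the Hamiltonian path into a cycle via the pigeonhole argument on the endpoints' neighbors) is correct and matches the paper, but the two gaps above are exactly where the paper's machinery --- the merge-counting lemma, the reservation scheme, and the expectation/Markov analysis --- does real work that your outline leaves unresolved.
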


We also prove the following, more general, theorem in Appendix~\ref{sec:rk}.
\begin{restatable}{theorem}{mainhamtwo}
\label{thm:ham2}
There exists a distributed algorithm for computing a Hamiltonian {\color{black} cycle} in \ore\ graphs {\color{black} and a Hamiltonian path in}  \rk\ graphs that runs in
$O(\log n)$ rounds in the \congest\ model. The algorithm succeeds with high probability.
\end{restatable}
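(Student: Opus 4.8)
The plan is to derive Theorem~\ref{thm:ham2} by revisiting the path-merging algorithm that underlies Theorem~\ref{thm:ham} and localizing every point at which the bound $\delta(G)\ge n/2$ is actually invoked. I expect that algorithm to maintain a spanning family of vertex-disjoint paths, starting from singletons and, over $O(\log n)$ phases of $O(1)$ \congest\ rounds each, to shrink the number of paths by a constant factor per phase using the three operations \elmerge, \conmerge, and \cymerge, until a single spanning path (or cycle) remains, which is then oriented by list-ranking. Each such merge is assembled from edges incident to the \heads\ and \tails\ of the current paths, so its correctness rests on a Dirac-style pigeonhole: for endpoints $a,b$ the index sets $\{i:a\sim v_{i+1}\}$ and $\{i:b\sim v_i\}$ live in a range of size roughly $n$, and their sizes $\deg(a),\deg(b)$ sum to enough that the sets intersect, exposing a stitch. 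I isolate three distinct roles of $\delta(G)\ge n/2$: (i) the symmetric pigeonhole that certifies a stitch; (ii) the guarantee that every endpoint has enough splice candidates to participate in the randomized per-phase matching of paths; and (iii) the diameter-$2$ property of \dirac\ graphs, which lets the (possibly far-apart) two endpoints of a single path coordinate a \cymergee\ in $O(1)$ rounds through a common neighbor. My approach is to re-establish (i)--(iii) under the \ore\ and \rk\ hypotheses while leaving the phase structure, the randomized conflict resolution, and hence the $O(\log n)$ round bound intact.

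For \ore\ graphs most of this is mechanical. First, \ore\ graphs still have diameter $\le 2$: for non-adjacent $u,v$ the bound $\deg(u)+\deg(v)\ge n$ forces $|N(u)\cap N(v)|\ge 2$, so role (iii) is untouched. Second, the pigeonhole of role (i) uses only that the two degree sets together exceed the index range; when the two relevant endpoints $a,b$ are non-adjacent this is exactly $\deg(a)+\deg(b)\ge n$, and when they happen to be adjacent the merge or cycle closure is immediate. The only genuine change is role (ii): the \ore\ condition is a \emph{per-pair sum} bound, not a per-vertex bound, so a path may carry one low-degree endpoint; I would therefore drive each merge through the necessarily-heavy endpoint of the path, using that $\deg(a)+\deg(b)\ge n$ implies $\max(\deg(a),\deg(b))\ge n/2$. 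With this rerouting the Chernoff argument for constant-fraction progress per phase goes through, and the output is a Hamiltonian cycle, from which a maximum matching is read off by taking alternate edges.

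For \rk\ graphs two further points must be addressed, but neither is the communication catastrophe one might fear. A neighborhood-partition argument shows that \rk\ graphs have \emph{bounded} diameter (in fact $\le 4$: if a geodesic had length $\ge 5$, the pair $(x_0,x_3)$ at distance $3$ satisfies $\deg(x_0)+\deg(x_3)\ge n-2$ with disjoint neighborhoods, forcing $N(x_0)\sqcup N(x_3)=V\setminus\{x_0,x_3\}$, which $x_5$ violates). Hence role (iii) survives with an $O(1)$ loss, and list-ranking still costs only $O(\log n)$. The pigeonhole of role (i) now carries a distance deficit, $\deg(a)+\deg(b)\ge n+1-\dist(a,b)$, but since $\dist(a,b)\le 4$ this deficit is at most $3$, a constant that the intersecting-sets count can absorb exactly as in Rahman--Kaykobad's argument; the price is that a stitch extends a path rather than closing a cycle, so the final step yields a Hamiltonian path and any use of \cymerge\ that relied on cycle-closure must be replaced by a path-preserving variant.

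The hard part, and where I expect the real work to sit, is the shared consequence of replacing the per-vertex bound by a per-pair sum: under \ore\ and \rk, a constant fraction of paths may have a low-degree endpoint, and the randomized, congestion-controlled matching of paths must be re-engineered so that every path nonetheless participates through its heavy endpoint while no edge ever carries more than $O(\log n)$ bits per round. Proving that this asymmetric participation still matches a constant fraction of the paths per $O(1)$-round phase w.h.p.---so that the $O(\log n)$ total round bound is preserved---is the crux; for \rk\ one must additionally verify that the entire pipeline (including the \conmerge\ and the modified \cymerge) degrades gracefully to a spanning Hamiltonian path once the constant distance deficit is folded into the counting. By contrast, the diameter and pigeonhole ingredients, which at first glance seem most threatened by the weaker hypotheses, turn out to be the easy substitutions.
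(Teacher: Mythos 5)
Your proposal correctly re-derives two of the easier ingredients (Ore graphs still have diameter at most $2$; \rk\ graphs have constant diameter, and the pigeonhole count degrades only by an additive constant for non-adjacent endpoint pairs), but it leaves the actual crux unsolved, and the direction you sketch for resolving it would not work. The counting lemma at the heart of the Dirac analysis (Lemma~\ref{lemma:leftdeg}) bounds $M(P)\ge|\mathcal{P}|$ using the \emph{sum} $d(u)+d(v)\ge n$ over \emph{both} endpoints of $P$: an elementary merge of $P$ into $Q$ requires $u$ and $v$ to attach to two consecutive vertices of $Q$, so ``driving the merge through the necessarily-heavy endpoint'' cannot help --- with one light endpoint the sum can be as small as $n/2+O(1)$, and the lower bound on $M(P)$ becomes vacuous. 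Moreover, under Ore/\rk\ the sum bound is only available when the two endpoints are \emph{non-adjacent}; a path with adjacent (hence possibly both light) endpoints whose pair-partner has no edge to it is exactly the case your case split (``when they happen to be adjacent the merge or cycle closure is immediate'') does not cover, since no cycle merge with its partner is available either. You explicitly flag the re-engineering of the per-phase matching as ``the hard part'' and defer it --- but that deferred part is precisely the content of the theorem.

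The idea you are missing is structural rather than algorithmic: in an Ore graph the vertices of degree $<n/2$ form a clique (any two non-adjacent vertices have degree sum at least $n$, hence at least one is heavy), and the paper proves that in an \rk\ graph, after disposing of one degenerate case solvable in $O(1)$ rounds, the non-heavy vertices split into two cliques $\hat{A}_+$ and $\hat{C}$ (Claims~\ref{claim:Aclique} and~\ref{claim:biclique}). This lets the paper avoid re-engineering the randomized matching altogether: at the start of every iteration it greedily concatenates away, inside these cliques and in $O(1)$ rounds, all paths having a non-heavy endpoint, leaving at most two exceptional paths; the unmodified Dirac machinery then runs only on paths whose endpoints both have degree at least $n/2$, for which $d(u)+d(v)\ge n$ holds unconditionally and Lemma~\ref{lemma:leftdeg} applies verbatim. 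The clique structure also repairs the communication layer (your role (iii)): the depth-$2$ spanning trees span only the heavy vertices of each path, and the light vertices are attached by depth-$1$ stars inside $\hat{A}_+$ and $\hat{C}$, costing a constant-factor overhead. Finally, the endgame (at most two leftover exceptional paths) is closed by an explicit deterministic case analysis rather than by the probabilistic argument --- which is necessary, since your ``constant deficit'' version of the counting bound gives nothing once $|\mathcal{P}|$ is itself a constant.
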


A nice outcome of Theorem~\ref{thm:ham2} is that a sequential simulation of the \congest\ algorithm for the \rk\ graphs yields a polynomial time sequential algorithm for finding a Hamiltonian path in these graphs as well.

\subsection{High-level Description of the Algorithm}
Our algorithm maintains a path-cover of the graph, where a path-cover is a set of paths in the graph such that each vertex of the graph belongs to exactly one of the paths in the set.

Initially, the path-cover consists of paths of constant length. Hence the size of the initial path-cover is linear in $n$. Then the algorithm proceeds in iterations, where in each iteration, the size of the path-cover decreases by a constant factor, with constant probability. Consequently, after $\Theta(\log n)$ iteration, the size of the path-cover is $1$. Namely, a Hamiltonian path is found.
The decrease in the size of the path-cover occurs as in each iteration (with constant probability) a constant fraction of the paths are merged into other paths, as we describe next.

We consider three types of merges. An \emph{elementary merge} occurs when a path $P$ is merged into a path $Q$ by connecting the endpoints of $P$ to the endpoints of an edge of $Q$.
A \emph{concatenation merge} occurs when two paths are merged by connecting their endpoints. 
Finally, a \emph{cycle merge} occurs when merging two cycles that are connected with an edge into a single path.

In each iteration, the algorithm proceeds as follows. First, the paths of the path-cover are paired. 
Now, two phases are performed, as follows.  
In the first phase of each iteration, only pairs of paths for which a special condition (which we describe momentarily) holds are merged. The special condition guarantees that each one of these pairs can be merged into a single path. Specifically, a pair of paths, $(P, Q)$ satisfy this special condition if the subgraph induced on the vertices of each one of them has a Hamiltonian cycle and additionally $P$ and $Q$ are connected with an edge to each other. 

In the second phase of each iteration, concatenations and elementary merges are performed. The merges that are performed in this phase are selected as follows. Let $P$ be a path in the path cover. The {\em elements} of $P$ consists of its edges and its endpoints. 
Initially, each element of $P$ reserves itself to at most a single path, $Q$. The path $Q$ is selected as follows. If the (reserved) element is an endpoint of $P$, $v$, then $Q$ is a path with an endpoint incident to $v$, chosen uniformly at random from the set of such paths.
Otherwise, if the (reserved) element is an edge~\footnote{We note that although the graph is undirected, we keep an orientation on the edges of the paths of the path cover (so the obtained paths are directed).
} of $P$, $(u, v)$, then $Q$ is a path with an endpoint incident to $u$, chosen uniformly at random from the set of such paths.
A  reservation of an element of $P$ to a path $Q$ grants $Q$ the exclusive right to be merged into $P$ using this element.
The purpose of these reservations is to avoid a scenario in which two different paths are trying to merge into another path by using the same element.

We say that a reservation of an element is \emph{useful} for a path $Q$ if $Q$ can be merged into another path via this element. By construction, all reservations of endpoints are useful. However, a reservation of an edge may be non-useful since the decision to reserve an edge to a path, $Q$, is done only by one of the endpoints of $e$. Hence, it might be the case that the other endpoint of $e$ is not adjacent to the other endpoint of $Q$.
Nonetheless, we show that on expectation, a constant fraction of the paths will have at least one useful reservation. 
After setting the reservations, each path is notified of the set of its useful reservations (if any). This can be carried out without congestion because of the exclusivity of the reservations. Then each path arbitrarily selects one of its useful reservations.

At this point, each path that has a useful reservation can be merged into another path via the respective reserved element exclusively. However, there are two problems with executing these merges. The first problem is that we want to avoid lengthy sequences of merges as this blows up the round complexity of the algorithm. Roughly speaking, this comes from the fact that when we merge paths (possibly many) into a single new path, all vertices of the corresponding (old) paths are updated about the identity of the new path. Moreover, for each new path (which is an outcome of possibly many merging operations), the algorithm constructs a spanning tree, of depth $2$, which spans the vertices of the path. See more details in Appendix~\ref{sec:imp} on the role of these spanning trees in our algorithm.

The second problem is that these sequences of merges may be conflicting.
A simple example of a conflict is when a graph $P$ tries to merge into $Q$ via an edge of $Q$, and $Q$ tries to merge into $P$ via an edge of $P$. Clearly, these two merges cannot  be carried out simultaneously. Moreover, this example can be extended into arbitrarily long cycles.   

Fortunately, these two problems can be remedied by the following simple idea. Each path tosses a fair coin. Then each path, $P$, is merged via its selected (reserved) element only if $P$ tossed \tails{} and $Q$ tossed \heads{}, where $Q$ is the path of the respective element. On expectation $1/4$ of the merges will be in the ``right'' orientation. In our analysis, we prove that this suffices for our needs (see more details on Subsection~\ref{sec:progress}). 
This completes the description of the second phase of each iteration and concludes the description of the algorithm. 

\subsection{Correctness and Analysis of the Algorithm}
The analysis of the algorithm has two ingredients.
The first and main ingredient is showing that for any fixed iteration, the size of the path cover decreases by a constant fraction on expectation.
The second ingredient is showing that after $\Theta(\log n)$ iterations, with high probability, all paths are merged into a single  Hamiltonian path. 
   
For the first ingredient of the analysis, we define the notion of being a {\em  good path}. Roughly speaking, a path is good if there are sufficiently many edges connecting its endpoints to vertices of other paths (the actual definition is more cumbersome than this, but this is essentially the property that we need). This property guarantees that any good path has many options for merging into other paths. In particular, the number of options is linear in the size of the current path cover (Lemma~\ref{lemma:leftdeg}). We use this fact to show that, on any fixed iteration, a good path receives a useful reservation with constant probability (Claim~\ref{clm:single}).

Finally, we prove that there are sufficiently many good paths. Recall that at the beginning of each iteration, the paths are paired. 
In the algorithm analysis, we prove that for each one of the pairs of paths, $(P, Q)$, that were not merged in the first phase, either $P$ or $Q$ are good with respect to the current path-cover. We then show that consequently, this guarantees that with constant probability, a constant fraction of these paths will be merged into other paths in the second phase of the algorithm. 

\subsection{Adaptation of the Algorithm for \ore\ and \rk\ Graphs}

Since the family of \rk\ graphs contains \ore\ graphs we, from now on, focus on \rk\ graphs.
We begin by proving some structural properties of \rk\ graphs. One of these properties is that the vertices in \rk\ graphs can be partitioned into $3$ sets $A$, $C$, and $H$ where all the vertices in $H$ satisfy Dirac's condition and the subgraph induced on each one of the sets $A$ and $C$, form a clique. 
This structure allows us to perform as in the algorithm for \dirac\ graphs with the only difference that at the beginning of each iteration, we get rid of (almost) all the paths in which one endpoint is not in $H$. 
Another technicality that we need to handle is the fact that our algorithm for \dirac\ graphs uses spanning trees to manage the communication within the different paths in the path cover. In \rk\ graphs these spanning trees may not span the entire respective paths. However, we show that with a slight adaptation, it is possible to maintain communication within the paths while adding only a constant factor blow-up in the round complexity.   
%

\subsection{Related Work}
\ifnum\arxiv=1
\paragraph*{Parallel Algorithms. }
As mentioned above, Dahlhaus et al.~\cite{DHK93} gave a $O(\log^4n)$ \cpram\ algorithm that uses a linear number of processors to find a Hamiltonian cycle in Dirac graphs. Another generalization of Dirac graphs are \emph{Chv\'{a}tal graphs}. A graph is called a Chv\'{a}tal graph if its degree sequence $d_1 \leq d_2 \leq \cdots \leq d_n$ satisfies that for every $k < n/2$, $d_k \leq k$ implies that $d_{n-k} \geq n-k$. Chv\'{a}tal proved (see e.g.,~\cite{bollobas2004extremal}) that Chv\'{a}tal graphs are also Hamiltonian. In~\cite{bondy1976method} a sequential polynomial time algorithm that finds Hamiltonian cycles in Chv\'{a}tal graphs has been shown. S\'{a}rk\"{o}zy~\cite{sarkozy2009fast} proved that a deterministic $O(\log^4 n)$ time \epram\  algorithm with a polynomial number of processors that finds a Hamiltonian cycle in a \emph{$\eta$-Chv\'{a}tal graphs} exists. A graph is called $\eta$-Chv\'{a}tal graphs if for every $k < n/2$, $d_k \leq \min\{k+\eta n,n/2\}$, it holds that $d_{n-k-\eta n} \geq n-k$, where $0 < \eta < 1$.
\fi

\paragraph*{Distributed Algorithms with a Promise. }
It is known that a random $G(n, p)$ graph contains w.h.p. a Hamiltonian cycle if $p$ is at least $(\log n + \log\log n + t(n))/n$, for any divergent function $t(n)$~\cite{Bol11}.
Thus for any such $p$, the problem of deciding whether the graph is Hamiltonian becomes trivial; however, finding the Hamiltonian path or cycle, in this case, is still non-trivial. 
The problem of finding a Hamiltonian cycle in the distributed setting was initially studied by Levy et al.~\cite{LLP05} that provided an upper bound whose round complexity is $O(n^{3/4+\epsilon} )$ that w.h.p. finds a Hamiltonian cycle given that $p = \omega(\sqrt{\log n}/n^{1/4})$. 
Thereafter, other upper bounds were designed in the \congest\ model (in which the message size is bounded) for dealing with various ranges of $p$, as described next. 
The algorithm of Chatterjee et al. ~\cite{CFPP18} works for $p \leq \frac{c \ln n}{n^\delta}, (0 < \delta \leq 1)$ and finds a Hamiltonian cycle w.h.p. in $\tilde{O}(n^\delta )$ rounds.  
For $p$ in $\tilde{\Omega}(1/\sqrt{n})$, Turau ~\cite{T20}, provides an algorithm that finds w.h.p. a Hamiltonian cycle in $O(\log n)$ rounds.
More recently, Ghaffari and Li~\cite{GJ18} showed the existence of a distributed algorithm for finding a Hamiltonian cycle, w.h.p., in $G(n,d)$ for $d=C\log n$ whose round complexity is $2^{O(\sqrt{\log n})}$, where $C$ is a sufficiently large constant.
\paragraph*{Lower Bounds in the \congest\ Model. } 
It is well known that certain properties can not be decided in the \congest\ model in a number of rounds which is $o(n^2)$~\cite{P00}. As mentioned above, Bachrach et al.~\cite{BCHDELP19} proved a lower bound of $\tilde{\Omega}(n^2)$ rounds for various problems in the \congest\ model, including  Testing Hamiltonicity in general graphs.

\ifnum\arxiv=0
    For other related work on Parallel Algorithms, see Appendix~\ref{app:related}.
\fi

\subsection{Comparison with the Algorithm of Dahlhaus et al.\texorpdfstring{~\cite{DHK93}}{[DHK93]}}

As mentioned above, our algorithm builds on ideas from the algorithm of Dahlhaus et al.~\cite{DHK93} for finding a Hamiltonian path in Dirac's graphs. 
Their algorithm also proceeds in iterations such that at each iteration it first performs cycle merges, then it performs concatenation merges and finally, it performs elementary merges.
However, the specific structure of their algorithm and how these merges are selected are quite different from our algorithm.

We shall demonstrate several structural differences without going into all the details of their algorithm (which is more involved than our algorithm). 
These differences serve us in obtaining an improved round complexity and a simpler algorithm.  
Thereafter we shall emphasize the specific differences that arise from the fact that our algorithm works in the \congest\ model rather than the \pram\ model.

The first structural difference is that on each iteration, before their algorithm turns into performing elementary merges it first has to exhaust most of the cycle merges, which requires an inner loop of $\Theta(\log n)$ steps and thereafter it exhausts most of the concatenation merges by executing a special subroutine of Israeli and Shiloach~\cite{israeli1986improved} which returns both a vertex cover and an approximated maximum matching.

This subroutine is executed twice. One time on the subgraph induced on the endpoints of the paths in the path cover and another time on an auxiliary graph where on one side we have the set of paths and on the other side we have the set of edges composing the paths. 

The reason that their algorithm exhausts the three types of merges in phases is that the progress of each phase relies on the exhaustion of the merges of the previous step.

For comparison, per iteration, our algorithm performs only one step of cycle merges and then one step in which both concatenation and elementary merges are performed simultaneously. 
We prove that this is sufficient to make enough progress per iteration. 

Another difference is that in their algorithm, at the beginning of each iteration, every path is classified into one of two types. We are able to avoid this classification altogether and use a somewhat different classification only in the analysis.  

We next list several challenges that arise specifically in the \congest\ model and in particular do not allow us to easily translate the algorithm of~\cite{DHK93} into the \congest\ model. 

The first problem is that we do not have shared memory among the processors so how can we efficiently merge even a pair of paths?
To this end, our algorithm maintains spanning trees, of depth $2$, on each one of the paths in the constructed path cover. 
Therefore, initially, we have a linear (in $n$) number of spanning trees (and this number decreases as the algorithm progresses). 
We show that each edge participates in at most $2$ different spanning trees and so communication within vertices of the same path can be maintained without causing congestion.  

Another problem is concerned with elementary merges. Consider a path cover $\mathcal{P}$ and an edge $(u, v)$ on a path $P\in \mathcal{P}$. 
An elementary merge of a path $Q$ into $P$ can be performed via $(u ,v)$ only if one endpoint of $Q$ is adjacent to $u$ and the other is adjacent to $v$. However, there might be many endpoints that are adjacent to $u$ or $v$, so how can we find the set of paths that can be merged via $(u, v)$ into $P$ without communicating too much between $u$ and $v$? As mentioned above, we show that we don't have to find this set. Specifically, we show that when $u$ reserves the edge $(u, v)$ to an endpoint that is adjacent to $u$, picked uniformly at random, then every path receives a useful reservation with constant probability. We remark that in this case, we rely on the randomness of our algorithm while the algorithm of~\cite{DHK93} is deterministic. 

Finally, we want to avoid long sequences of merges so we won't have to deal with long sequences of updates. 
To this end, we use the coin tosses of the vertices and perform merge operations only if they agree with the orientation defined by the coin tosses. As described above, this is also useful for avoiding conflicting merges.
Consequently, the merges can be carried out simultaneously with very little and local coordination.   

\section{Paths-Merge Types and Paths Classification}\label{sec:prelimdhk}
\paragraph*{Notation. }
Let $G=(V,E)$ be an undirected simple graph, where $V$ is the vertex set, and $E\subseteq \{\{u,v\} \mid u,v \in V\}$ is the edge set. Let $n$ denote $|V|$ and let $m$ denote $|E|$. 
For a path $P=(u_1, \ldots, u_{\ell})$, we denote by $V(P)$ the vertex set of $P$, i.e.,  $V(P) \triangleq\{u_1,\ldots, u_{\ell}\}$. 
For $v\in V$, let $N(v)$ denote the neighbors set of $v$ in $G$, that is $N(v)=\{u \in V \mid \{v,u\}\in E\}$. 
Let $d(v)$ denote the degree of $v$, i.e, $d(v) = |N(v)|$.
For a pair of vertices $u$ and $v$, let $\delta(u, v)$ denote the length of a shortest path between $u$ and $v$.  
We say that a set of paths in $G$, $\{P_i\}_{i=0}^{k-1}$, is a \emph{path-cover} of $G$ if its union covers the vertex set of $G$, that is, if $\bigcup_{i=1}^{k-1}V(P_i)=V(G)$.  
For a path $P$ in $G$, let $d_P(v)$ denote the number of neighbours of $v$ in $P$. 
For a path $P = (u, \ldots, v)$, we refer to the vertices $u$ and $v$ as the \emph{endpoints} of $P$.

\paragraph*{Path Merging Types. } Through the course of the algorithm's execution, the algorithm performs three kinds of path merging depending on the paths at hand: an elementary merge, a concatenation, and cycle merging (see Figure~\ref{fig:operations}). These merge operations are defined as follows.

\begin{definition}[Elementary merge~\cite{DHK93}]\label{def:elem}\sloppy
    Let $P=(u_1, \ldots, u_{\ell})$ and $Q=(v_1, \ldots, v_m)$ be two disjoint paths. If $\{u_1,v_i\}, \{u_{\ell},v_{i+1}\} \in E$, then $(v_1, \ldots, v_i, u_1, \ldots, u_{\ell}, v_{i+1}, \ldots, v_m)$ is a path. If $\{u_1,v_{i+1}\}, \{u_{\ell},v_{i}\} \in E$, then $(v_1, \ldots, v_i, u_{\ell}, \ldots, u_1, v_{i+1}, \ldots, v_m)$ is a path.  In either case, we say that we \emph{merged $P$ into $Q$ along the edge $\{v_i, v_{i+1}\}$}. We call this step an \emph{elementary merging operation}. 
\end{definition}
%
%
\begin{definition}[Concatenation~\cite{DHK93}]\label{def:concat}
    Let $P=(u_1, \ldots, u_{\ell})$ and $Q=(v_1, \ldots, v_m)$ be  two disjoint paths. If there is an edge connecting an  endpoint of $P$ (either $u_1$, or $u_{\ell}$) and an endpoint $v \in \{v_1, v_m\}$ of $Q$, then we can use any of these edges to concatenate $P$ and $Q$ and say that we \emph{concatenated $P$ to $Q$ along vertex $v$}.  We call this operation a \emph{concatenation}.
\end{definition}

%
\begin{definition}[Cycle Merging~\cite{DHK93}]\label{def:cyclemerge}
    Let $C$ and $D$ be two disjoint cycles. If there is an edge connecting a vertex from $C$ and a vertex from $D$, we can use this edge to get a path that passes through all the vertices of $C$ and $D$. We call this operation a \emph{cycle merging}.
\end{definition}
\begin{figure*}
    \centering
     \begin{subfigure}[b]{0.32\textwidth}
         \centering
         \includegraphics[width=\textwidth]{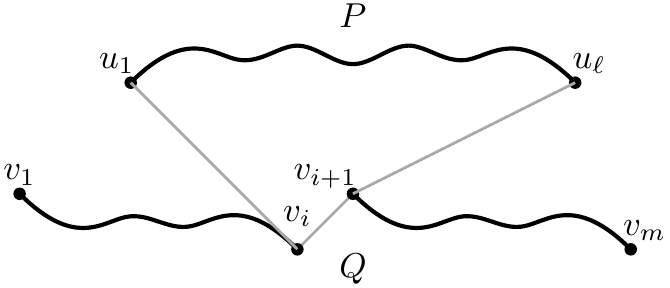}
         \caption{\elmergee{} operation}
         \label{fig:PathMerge}
     \end{subfigure}
          \hfill
     \begin{subfigure}[b]{0.32\textwidth}
         \centering
         \includegraphics[width=\textwidth]{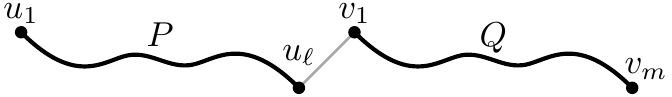}
         \caption{ \conmergee{} operation}
         \label{fig:PathCont}
     \end{subfigure}
     \hfill
     \begin{subfigure}[b]{0.32\textwidth}
         \centering
         \includegraphics[width=\textwidth]{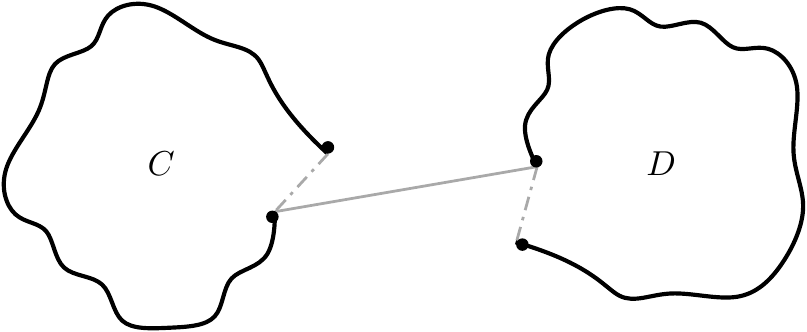}
         \caption{\cymergee{} operation}
         \label{fig:CycleMerge}
     \end{subfigure}
    \caption{Path merging types.}
    \label{fig:operations}
\end{figure*}
%
%

\paragraph*{Paths Classification. } For the sake of the analysis of the algorithm we classify the paths that the algorithm maintains as \emph{sociable paths} (\`{a} la Dahlhaus et al.~\cite{DHK93}) or as \emph{cycled paths}, as follows. 
\begin{definition}[\cite{DHK93}]\label{def:socint}
    Let $P=(u,\ldots,v)$ be a path in a graph $G$. We say that the path $P$ is \emph{sociable} if
            $d_{P}(u)+d_{P}(v)+1\leq |V(P)|$.
\end{definition}

\begin{definition}[Cycled Path]
Let $P=(u_1,\ldots,u_{\ell})$ be a path in a graph $G$. We say that the path $P$ is \emph{cycled} if $\{u_1, u_{\ell}\} \in E$ or if  there exists an edge of $P$, $\{u_i, u_{i+1}\}$ such that both $\{u_1, u_{i+1}\}, \{u_i, u_{\ell}\} \in E$.
\end{definition}
It is easy to see that if a path $P$ is cycled, then the subgraph induced on $V(P)$ is Hamiltonian.

\noindent
\ifnum\arxiv=1
    We shall use the following basic claim. 
\else
    We shall use the following basic claim, the proof of which is deferred to Appendix~\ref{app:omittedpf}.
\fi
\begin{claim}\label{lemma:basic}
    Let $P=(u_1,\ldots,u_{\ell})$ be a path that is not cycled. Then $P$ is sociable. 
\end{claim}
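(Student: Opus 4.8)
The plan is to prove the contrapositive-flavored statement directly by a rotation/counting argument in the spirit of the classical proof of Dirac's theorem. Write $u = u_1$ and $v = u_\ell$, and recall that $|V(P)| = \ell$, so the goal is exactly to show $d_P(u_1) + d_P(u_\ell) \le \ell - 1$ under the assumption that $P$ is not cycled. The idea is to encode the neighbors of each endpoint as index subsets of $\{1, \ldots, \ell-1\}$ and to argue that these two index sets are disjoint, which immediately bounds their total size.

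Concretely, I would define
\begin{align*}
I &\eqdef \{\, i \in \{1,\ldots,\ell-1\} : \{u_1, u_{i+1}\} \in E \,\}, \\
J &\eqdef \{\, i \in \{1,\ldots,\ell-1\} : \{u_i, u_\ell\} \in E \,\}.
\end{align*}
The map $i \mapsto u_{i+1}$ is a bijection between $I$ and the neighbors of $u_1$ lying on $P$, so $|I| = d_P(u_1)$; similarly $i \mapsto u_i$ gives $|J| = d_P(u_\ell)$. Since $P$ is not cycled we have $\{u_1, u_\ell\} \notin E$, which confines the neighbors of $u_1$ to $\{u_2, \ldots, u_{\ell-1}\}$ and the neighbors of $u_\ell$ to $\{u_2, \ldots, u_{\ell-1}\}$; in index terms this gives $I \subseteq \{1,\ldots,\ell-2\}$ and $J \subseteq \{2,\ldots,\ell-1\}$, so both sets live inside the $(\ell-1)$-element set $\{1,\ldots,\ell-1\}$.

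The heart of the argument is disjointness: if some $i$ belonged to $I \cap J$, then both $\{u_1, u_{i+1}\} \in E$ and $\{u_i, u_\ell\} \in E$ would hold, and since $\{u_i, u_{i+1}\}$ is an edge of $P$, this is precisely the second alternative in the definition of a cycled path, contradicting the hypothesis. Hence $I \cap J = \emptyset$, so $|I| + |J| \le \ell - 1$, i.e. $d_P(u_1) + d_P(u_\ell) \le \ell - 1$, which rearranges to $d_P(u_1) + d_P(u_\ell) + 1 \le \ell = |V(P)|$, establishing that $P$ is sociable.

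I expect the only delicate step to be the careful bookkeeping of the index ranges and the verification that the bijections between $I$, $J$ and the respective neighbor sets are exact (in particular that the forbidden endpoint adjacency $\{u_1,u_\ell\}$ removes the one index that could otherwise put $\ell-1$ into $I$ or $1$ into $J$). Everything else is a direct consequence of the two parts of the cycled-path definition, so no case analysis beyond the single $I \cap J$ contradiction should be needed.
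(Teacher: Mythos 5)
Your proof is correct and takes essentially the same approach as the paper's: your disjointness of $I$ and $J$ is exactly the paper's observation that if $u_\ell$ is adjacent to the predecessor $u_{j-1}$ of some neighbor $u_j$ of $u_1$, then $P$ is cycled, and otherwise the two neighbor sets pack into $\ell-1$ slots. The only difference is organizational: the paper phrases it as the contrapositive (not sociable implies cycled), while you run the same count directly with explicit index sets.
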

\ifnum\arxiv=1
    \begin{proof}
    We prove the contra-positive claim.
    Assume the $P$ is not sociable. 
    Then  $d_{P}(u_{\ell})> (|V(P)|-1) - d_{P}(u_1)$.
    
    Consider the $d_{P}(u_1)$ vertices on $P$, $u_j$, that $u_1$ is connected to.
    For each such $u_j$, if $u_{\ell}$ is connected to $u_{j-1}$ then it follows that $P$ is cycled.
    Otherwise, $d_{P}(u_{\ell})$ is at most $(|V(P)|-1) - d_{P}(u_1)$ in contradiction to our assumption.
    \end{proof}
\fi
%
%
%
%
%
%


\section{The Algorithm}
In this section, we list our distributed algorithm (see Algorithm~\ref{alg:hamalg}) without giving all the details of implementation. We then prove its correctness in Section~\ref{sec:correct} and in Appendix~\ref{sec:imp} we discuss in more detail how the algorithm is implemented in the \congest\ model.
We establish the following theorem.
{\renewcommand\footnote[1]{}\mainham*}
\subsection{Listing of the Distributed Algorithm}

Our algorithm begins with finding an initial path-cover of the graph in which each path is of length at least $2$, denoted by $\mathcal{P}_0$. Then the algorithm proceeds in $\Theta(\log n)$ iterations (in which the size of the path-cover decreases by a constant fraction with constant probability). 

We denote the path-cover at the beginning of the $i$-th iteration by $\mathcal{P}_i$.
At the beginning of each iteration, the paths in $\mathcal{P}_i$ are partitioned into pairs. 
Each pair of paths $(P, Q)$ such that $P$ and $Q$ are connected with an edge, and both $P$ and $Q$ are cycled are merged (Step~\ref{step.cyc}).
We denote by $\mathcal{P}^a_i$ the resulting path-cover.

Thereafter, each edge and endpoint of a path in $\mathcal{P}^a_i$ reserves itself to an endpoint of another path in $\mathcal{P}^a_i$, which is picked uniformly at random.

Then, each path $P$ in the path-cover selects out of the elements that were reserved to it (i.e., either an edge or an endpoint) a single element, $\ell$, such that $P$ can be merged to another path via $\ell$ (we refer such elements as {\em useful} for $P$).

\begin{figure}
    \centering
         \begin{subfigure}[b]{0.49\textwidth}
         \centering
        \includegraphics[width=.8\textwidth]{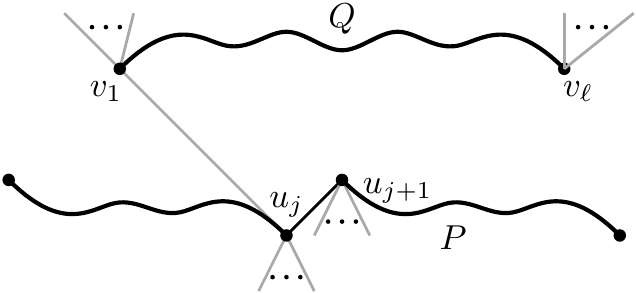}
         \caption{The vertex $u_j$ reserves the edge $\{u_j, u_{j+1}\}$ for $v_1$ (an endpoint of $Q$). However, $\{u_j, u_{j+1}\}$ is non-useful for $Q$ since $v_{\ell}$ (the other endpoint of $Q$) is not connected with an edge to $u_{j+1}$.}
         \label{fig:NotReserve}
     \end{subfigure}
     \hfill
     \begin{subfigure}[b]{0.49\textwidth}
         \centering
        \includegraphics[width=.8\textwidth]{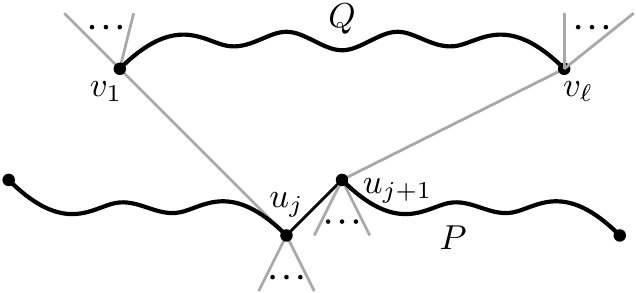}
         \caption{The vertex $u_j$ reserves the edge $\{u_j, u_{j+1}\}$ for $v_1$ (an endpoint of $Q$). The edge $\{u_j, u_{j+1}\}$ is useful for $Q$ since $v_{\ell}$ is connected with an edge to $u_{j+1}$ and so $Q$ can be merged into $P$ via $\{u_j, u_{j+1}\}$.}
         \label{fig:Reserve}
     \end{subfigure}
    \caption{Useful versus non-useful reservations.}
    \label{fig:reserve}
\end{figure}

Each path, $P$, tosses a random fair coin, $y_P$.
Next, all the merges are performed simultaneously where 
a path $P$ is merged via its selected element, $\ell$, to a path $Q$ only if $y_P = \tails{}$ and $y_Q = \heads{}$.  

\begin{remark}
 When a path $P$ is merged to a path $Q$ where $y_P = \tails{}$ and $y_Q = \heads{}$, the orientation of $Q$ is kept, and the orientation of $P$ may be reversed to maintain consistency with the orientation of $Q$. For example, if $P = (y_1, \ldots, y_\ell)$ is merged to $Q$ via $\{u, v\}$ where the orientation of $\{u, v\}$ is from $u$ to $v$ and $y_1$ is connected to $v$ and $y_\ell$ is connected to $u$ then the orientation of $P$ is reversed after the merge. 
 \end{remark}
 
\begin{algorithm}
\DontPrintSemicolon
  \KwInput{A Dirac graph $G=(V, E)$.}
  \KwInit{The algorithm maintains a path-cover $\mathcal{P}_i$ for every $i\geq 0$. 
  The path-cover $\mathcal{P}_{i+1}$ is computed from $\mathcal{P}_i$ via \elmerge{}, \cymerge{} and \conmerge{} operations. }
  \KwOutput{$\mathcal{P}_{\Theta(\log n)}$ is a Hamiltonian path. w.h.p.\tcp*[l]{For the exact constant within the $\Theta$ notation we refer the reader to Lemma~\ref{lemma:numofiter}.}}
\medskip
Compute a path-cover, $\mathcal{P}_0$ in the graph such that each path is of size at least $2$.\label{step1}\;
\For {$i \gets 0$ \KwTo $\ell = \Theta(\log n)$}{\label{step.for}
    %
    Pair all paths (except for at most one) and let $\mathcal{I}_i$ denote the set of these pairs of paths.\label{step:pair}\;
    Let $\mathcal{L}_i\subseteq \mathcal{I}_i$ denote the set of paired paths that have an edge between them and for which both paths are cycled.\;
    %
    %
    %
    Perform \cymerge{} on all pairs in $\mathcal{L}_i$.  \label{step.cyc}\tcp*[l]{see Definition~\ref{def:cyclemerge}}
    %
    %
    Let $\mathcal{P}^a_i$ denote the current path-cover\;
For every $v\in V$ let $D_i(v)$ denote the subset of endpoints of paths in $\mathcal{P}^a_i$, $u$, such that $\{u, v\}\in E$\;

\For{every $P\in \mathcal{P}^a_i$ where $P=(u_1, \ldots, u_\ell)$\label{step:pi}}{
Every $u_j$ for $j\in \{1,\dots, \ell-1\}$ picks an endpoint $v$ u.a.r. from $D_i(u_j)$ and \emph{reserves} the edge $(u_j, u_{j+1})$ for $v$.\;

Additionally, each endpoint of $P$, $v$, reserves itself to an endpoint which is picked u.a.r. from $D_i(v)$. \label{step:di}\;

$y_P \gets \begin{cases}\heads{}, & \text{w.p. } 1/2, \\ \tails{}, & \text{o.w.} \end{cases}$.\label{step.toss}\;
}

$M_i \gets \emptyset$\label{step.init} \;

\For{every $P\in \mathcal{P}^a_i$}{
We say that an endpoint or an edge, $x$, is useful for $P$ (w.r.t. $\mathcal{P}^a_i$) if $P$ can be concatenated or merged along $x$ to another path in $\mathcal{P}^a_i$.

Let $S_i(P)$ denote the set of elements reserved for the endpoints of $P$ in Steps~\ref{step:pi}-\ref{step:di} which are also {\em useful} for $P$.\;

$P$ picks arbitrarily one of the elements  in $S_i(P)$ (assuming it is not empty) and adds it to $M_i$.\label{step:pick}\;
    }
    Perform \conmerge{} and \elmerge{} with accordance to $M_i$ and the $y_P$ variables: a path $P_1$ merges into a path $P_2$ if there is a corresponding merge operation in $M_i$ and if $y_{P_1}=\tails{}$ and $y_{P_2}=\heads{}$. \label{step.merge1} \tcp*[l]{see Definitions~\ref{def:elem},~\ref{def:concat}}\label{step.orientmerge}
    %
    %
     %
    %
}
    \Return $\mathcal{P}_{\ell}$.\label{step.return}
\caption{Finding a Hamiltonian path in a Dirac graph.}\label{alg:hamalg}
\end{algorithm}
\section{Correctness of the Algorithm}\label{sec:correct}

In this section, we prove the correctness of our algorithm. 

We begin by giving a lower bound on the number of possible merges for each path in the path-cover. 
We then use this bound to give a lower bound on the expected number of merges carried out in each iteration. 
Finally, we prove that with high probability after $\Theta(\log n)$ iterations, all paths are merged into a single path.
\ifnum\arxiv=0
Some of the proofs are deferred to Appendix~\ref{app:omittedpf}.
\fi
\subsection{Number of Possible Merges for Good Paths}

In this section, we provide the proof of Lemma~\ref{lemma.paths} which gives a lower bound on the number of possible ways a path $P$ can be merged into a path $Q$. 
We then define the notion of {\em good} paths (Definition~\ref{def:good}). 
Roughly speaking, a path is good if its endpoints are neighbors of sufficiently many vertices that belong to other paths in the path cover. 
Thereafter, we use Lemma~\ref{lemma.paths} to give a lower bound on the total number of ways a good path, $P$, can be merged into any other path in the path-cover (Lemma~\ref{lemma:leftdeg}). 

Let $P=(u,\ldots,v)$ and $Q=(a,\ldots, b)$ be disjoint paths in $G$. %
Let $M(P, Q)$ denote the number of edges along which one can merge $P$ into $Q$ via elementary merging plus the number of vertices along which one can concatenate $P$ to $Q$ (see Definitions~\ref{def:elem} and~\ref{def:concat}, respectively).  %
Let $d(P, Q)$ denote the number of endpoints of $Q$ connected with an edge to an endpoint of $P$.
%


We begin with extending the following lemma from~\cite{DHK93}. 
\begin{lemma}[Lemma.~5.2.5~\cite{DHK93}, restated]\label{lemma:lemma525}
    Let $P=(u,\ldots,v)$ and $Q$ be disjoint paths in $G$.
    If  $d(P,Q)=0$, then the number of edges along which one can merge $P$ into $Q$ via elementary merging operations is at least $d_{Q}(u)+d_{Q}(v)-|V(Q)|+1$.
\end{lemma}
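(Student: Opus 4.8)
The plan is to reduce the statement to a purely combinatorial count of index sets along $Q$. Write $Q=(v_1,\ldots,v_m)$ with $m=|V(Q)|$, and record the positions on $Q$ at which the two endpoints of $P$ have neighbours: let $S\triangleq\{i : \{u,v_i\}\in E\}$ and $T\triangleq\{i : \{v,v_i\}\in E\}$, so that $|S|=d_Q(u)$ and $|T|=d_Q(v)$. The point of this encoding is that, by Definition~\ref{def:elem}, the edge $\{v_i,v_{i+1}\}$ of $Q$ admits an elementary merge of $P$ into $Q$ whenever $\{u,v_i\}\in E$ and $\{v,v_{i+1}\}\in E$, i.e.\ whenever $i\in S$ and $i+1\in T$. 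This is only one of the two orientations permitted by the definition, and I will deliberately use this one alone.

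First I would invoke the hypothesis $d(P,Q)=0$ to confine $S$ and $T$ to the interior of $Q$. Since $d(P,Q)=0$ means that neither endpoint of $Q$ (namely $v_1$ and $v_m$) is adjacent to an endpoint of $P$, we have $1,m\notin S$ and $1,m\notin T$, hence $S,T\subseteq\{2,\ldots,m-1\}$. This boundary control is exactly where the hypothesis enters, and it is what makes the eventual bound tight.

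Next I would count the merge-admitting edges. The number of indices $i$ with $i\in S$ and $i+1\in T$ equals $|S\cap(T-1)|$, where $T-1\triangleq\{t-1 : t\in T\}$, and distinct such indices $i$ give distinct edges $\{v_i,v_{i+1}\}$ of $Q$, each of which is genuinely merge-admitting. From $S\subseteq\{2,\ldots,m-1\}$ and $T-1\subseteq\{1,\ldots,m-2\}$ we get $S\cup(T-1)\subseteq\{1,\ldots,m-1\}$, a universe of size $m-1$, so the elementary inclusion--exclusion inequality $|X\cap Y|\geq|X|+|Y|-|X\cup Y|$ gives
\[
|S\cap(T-1)|\ \geq\ |S|+|T-1|-(m-1)\ =\ d_Q(u)+d_Q(v)-|V(Q)|+1,
\]
which is precisely the claimed lower bound.

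The main thing to be careful about is the bookkeeping at the two ends of $Q$: the naive version of the argument (ignoring the hypothesis) only places $S$ and $T-1$ inside a universe of size $m+1$, losing two in the bound, and it is exactly the observation that $d(P,Q)=0$ forbids $u$ and $v$ from seeing $v_1$ and $v_m$ that shrinks the universe to size $m-1$ and recovers the stated additive constant. A secondary point worth stressing is that restricting to the single orientation $i\in S,\ i+1\in T$ already suffices, so I avoid any double-counting subtleties that would arise from also counting the reversed orientation $i\in T,\ i+1\in S$; including that orientation could only strengthen the conclusion, so it is safe to discard it.
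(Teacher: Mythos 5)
Your proof is correct. Note, however, that the paper itself gives no proof of this statement: it is imported verbatim as Lemma~5.2.5 of~\cite{DHK93}, and the paper only proves the extension to the case $d(P,Q)>0$ (Lemma~\ref{lemma.paths}) by reducing to this cited lemma. So there is no in-paper argument to compare against; what your write-up adds is a self-contained proof. Your argument is the classical counting one: encode the neighbourhoods of $u$ and $v$ along $Q=(v_1,\ldots,v_m)$ as index sets $S$ and $T$, observe that each index $i$ with $i\in S$ and $i+1\in T$ yields a distinct merge-admitting edge $\{v_i,v_{i+1}\}$ (one of the two orientations allowed by Definition~\ref{def:elem}), and apply inclusion--exclusion inside the universe $\{1,\ldots,m-1\}$. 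The two places where care is needed are exactly the ones you flag: the hypothesis $d(P,Q)=0$ pins $S,T\subseteq\{2,\ldots,m-1\}$, so that both $S$ and the shifted set $T-1$ land in a universe of size $|V(Q)|-1$ (without this the bound degrades by two), and restricting to the single orientation $i\in S$, $i+1\in T$ is legitimate because it can only undercount, never overcount. Both steps check out, including the degenerate cases $|V(Q)|\leq 2$, where $S$ and $T$ are empty and the claimed bound is non-positive, hence vacuous.
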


To achieve better round complexity, our algorithm performs concatenation and elementary merges simultaneously. To this end, we prove the following lemma, which extends Lemma~\ref{lemma:lemma525} so that it also applies to cases where $d(P, Q)>0$. 
\begin{lemma}\label{lemma.paths}
    Let $P=(u,\ldots,v)$ and $Q=(a,\ldots, b)$ be disjoint paths in $G$ such that $|V(P)|, |V(Q)| \geq 2$.  Then,
    \begin{align}
        M(P,Q) \geq d_{Q}(u)+d_{Q}(v)-|V(Q)|+1 \:.
    \end{align}
\end{lemma}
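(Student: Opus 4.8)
The plan is to revisit the counting argument behind Lemma~\ref{lemma:lemma525} and to account for the boundary terms that its hypothesis $d(P,Q)=0$ suppressed; those terms will turn out to be paid for exactly by concatenations, which is why the extra $d(P,Q)$ possibilities do not appear explicitly on the right-hand side. First I would observe that $M(P,Q)$ splits as the number of elementary-merge edges plus the number of concatenation vertices, and that the latter count is precisely $d(P,Q)$ (an endpoint $v_1$ or $v_m$ of $Q$ is a valid concatenation vertex exactly when it is adjacent to an endpoint of $P$). Thus it suffices to lower bound the number of merge edges by $d_Q(u)+d_Q(v)-|V(Q)|+1-d(P,Q)$. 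When $d(P,Q)=0$ this is exactly Lemma~\ref{lemma:lemma525}, so I only need the refinement of its proof that keeps track of endpoint adjacencies.

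Write $Q=(v_1,\ldots,v_m)$ with $m=|V(Q)|$, and set $A=\{i : \{u,v_i\}\in E\}$ and $B=\{i : \{v,v_i\}\in E\}$, so that $|A|=d_Q(u)$ and $|B|=d_Q(v)$. By Definition~\ref{def:elem}, whenever $\{u,v_i\}\in E$ and $\{v,v_{i+1}\}\in E$ one can merge $P$ into $Q$ along the edge $\{v_i,v_{i+1}\}$; hence the number of merge edges is at least $|A\cap(B-1)|$, where $B-1:=\{j-1 : j\in B\}$, and distinct indices $i\in\{1,\ldots,m-1\}$ give distinct edges. Since the shift $j\mapsto j-1$ is injective, $|B-1|=d_Q(v)$, and inclusion--exclusion gives
\[
|A\cap(B-1)| \;=\; |A|+|B-1|-|A\cup(B-1)| \;=\; d_Q(u)+d_Q(v)-|A\cup(B-1)|.
\]

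The key observation is that the union $A\cup(B-1)\subseteq\{0,1,\ldots,m\}$ can reach the two boundary values $0$ and $m$ only in controlled ways: since $A\subseteq\{1,\ldots,m\}$, the value $0$ occurs only if $0\in B-1$, i.e.\ only if $\{v,v_1\}\in E$; and since $B-1\subseteq\{0,\ldots,m-1\}$, the value $m$ occurs only if $m\in A$, i.e.\ only if $\{u,v_m\}\in E$. Therefore $|A\cup(B-1)|\le (m-1)+\mathbf{1}[\{v,v_1\}\in E]+\mathbf{1}[\{u,v_m\}\in E]$, so the number of merge edges is at least
\[
d_Q(u)+d_Q(v)-|V(Q)|+1-\mathbf{1}[\{v,v_1\}\in E]-\mathbf{1}[\{u,v_m\}\in E].
\]
To finish, I would absorb the two indicator terms into the concatenation count: each endpoint of $Q$ adjacent to an endpoint of $P$ contributes to $d(P,Q)$, whence $d(P,Q)\ge \mathbf{1}[\{v,v_1\}\in E]+\mathbf{1}[\{u,v_m\}\in E]$. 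Adding back the $d(P,Q)$ concatenation vertices then yields $M(P,Q)\ge d_Q(u)+d_Q(v)-|V(Q)|+1$, as required.

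I expect the only delicate point to be the boundary bookkeeping: one must verify that the two potential ``lost'' units in the inclusion--exclusion universe correspond exactly to endpoint-to-endpoint adjacencies, that each such adjacency is genuinely counted (once) by $d(P,Q)$, and that there is no double counting between merge edges (which are edges) and concatenation vertices (which are vertices). A secondary sanity check is that a single merge orientation together with a single shift direction already suffices, so nothing beyond this shifted inclusion--exclusion is needed; the role of $d(P,Q)$ is purely to compensate for the weakened boundary of the universe.
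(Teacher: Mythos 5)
Your proof is correct, and it takes a genuinely different route from the paper's. The paper proves this lemma by a reduction to the case it can cite as a black box: it splits into the cases $d(P,Q)=0,1,2$; for $d(P,Q)\geq 1$ it attaches a dummy pendant vertex $a'$ to the endpoint of $Q$ adjacent to $P$, observes that this destroys exactly one concatenation and no elementary merge (so $M(P,Q')=M(P,Q)-1$ while $|V(Q')|=|V(Q)|+1$ and $d_{Q'}(u)=d_Q(u)$, $d_{Q'}(v)=d_Q(v)$), and then applies Lemma~\ref{lemma:lemma525} of~\cite{DHK93} to $P$ and $Q'$, doing this twice when $d(P,Q)=2$. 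You instead prove the bound directly by a shifted inclusion--exclusion: each index in $A\cap(B-1)$ yields a merge edge (one orientation of Definition~\ref{def:elem} suffices for a lower bound), the universe $A\cup(B-1)\subseteq\{0,\ldots,m\}$ exceeds $m-1$ only via the boundary indicators $\mathbf{1}[\{v,v_1\}\in E]$ and $\mathbf{1}[\{u,v_m\}\in E]$, and those indicators are dominated by the concatenation count $d(P,Q)$ --- where you implicitly need $v_1\neq v_m$, which is exactly what the hypothesis $|V(Q)|\geq 2$ guarantees, so the bookkeeping closes. The trade-off: the paper's reduction is shorter and reuses the imported lemma unchanged, but is opaque about where the slack lies; your argument is self-contained (it re-proves Lemma~\ref{lemma:lemma525} as the special case $d(P,Q)=0$ rather than citing it) and is in fact slightly sharper, since it shows the merge-edge count loses only the two orientation-specific endpoint adjacencies $\{v,v_1\}$ and $\{u,v_m\}$ rather than all of $d(P,Q)$.
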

\ifnum\arxiv=1
\begin{proof}
    We consider two cases: (1)~$d(P,Q)=0$, and (2)~$d(P,Q)>0$.

    The case where $d(P, Q)=0$ follows from Lemma~\ref{lemma:lemma525}.

    Let us assume that  $d(P,Q)=1$, and that w.l.o.g. one of the endpoints of $P$ is adjacent to $a$. We add a new vertex $a'$ and a new edge $\{a',a\}$ to $Q$. Let $Q'$ denote the resulting path. Now, paths $P$ and $Q'$ satisfy the conditions of Lemma~\ref{lemma:lemma525}, i.e., $d(P,Q')=0$. It holds that $M(P,Q') = M(P,Q) - 1$ since by extending the path $Q$ we eliminate a single possible concatenation operation while not changing the possible merging operations. Now, due to the first case of this proof and since $d_{Q'}(u)=d_{Q}(u)$ and $d_{Q'}(v)=d_{Q}(v)$ we obtain that 
    \begin{align*}
        d_{Q'}(u)+d_{Q'}(v)-|V(Q)\cup \{a'\}|+1 &\leq M(P,Q') \Leftrightarrow \\
        d_{Q}(u)+d_{Q}(v)-|V(Q)|-1+1 &\leq M(P,Q)-1\:, 
    \end{align*} 
    as required.
    For the case where $d(P,Q)=2$ we simply apply the previous case twice. The lemma follows. 
\end{proof}
\fi

We next give a lower bound on the number of merging operations which applies for a subset of paths in the path cover. 
We shall use the following definition. 

\begin{definition}
   Let $\mathcal{P}$ be a path cover and let $A(\mathcal{P})$ denote the set of paths, $P=(u,\ldots, v)$, in  $\mathcal{P}$ such that there exists a path $P'\in \mathcal{P}$ for which $|V(P')| \geq |V(P)|$ and $d_{P'}(u) + d_{P'}(v) = 0$.
\end{definition}
In words, $A(\mathcal{P})$ is the set of paths, $P \in\mathcal{P}$ for which there exists another path, $P'\in \mathcal{P}$, which is not shorter than $P$ and for which the endpoints of $P$ are not adjacent to any of the vertices composing $P'$. 
We next define the notion of {\em good path}.

\begin{definition}[good path]\label{def:good}
   Let $\mathcal{P}$ be a path cover. A path $P\in \mathcal{P}$ is {\em good} (w.r.t. $\mathcal{P}$) if it is sociable or in $A(\mathcal{P})$.
\end{definition}

The following lemma gives a lower bound on the number of merging operations for any path which is good.
In the proof of the lemma, we extend ideas from Corollary~5.2.6 and  Lemma~5.2.7 of Dahlhaus et al.~\cite{DHK93}. Our extension allows us to drop the stringent requirement of Dahlhaus et al.~\cite{DHK93} that the endpoints of $P$ are not adjacent to endpoints of any other path in the path cover. This allows us to support the execution of both concatenations and elementary merging operations simultaneously.  

\begin{lemma}\label{lemma:leftdeg}
Let $\mathcal{P}$ be a path-cover of $G$. 
For $P\in \mathcal{P}$ define $M(P) \eqdef \sum_{Q\in \mathcal{P}\setminus \{P\}} M(P, Q)$. 
    Then, for every good path, $P\in \mathcal{P}$ it holds that $M(P) \geq |\mathcal{P}|$.
\end{lemma}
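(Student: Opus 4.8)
The plan is to combine Lemma~\ref{lemma.paths}, which lower-bounds each individual $M(P,Q)$, with Dirac's degree condition and the simple fact that the path-cover partitions $V$. Writing $P=(u,\ldots,v)$, every neighbor of $u$ (resp. $v$) lies in exactly one path of $\mathcal{P}$, so $\sum_{Q\in\mathcal{P}}d_Q(u)=d(u)$ and likewise for $v$; since $G$ is Dirac, $d(u)+d(v)\geq n$. I would also use throughout that all paths in the cover have at least two vertices (the invariant maintained by the algorithm), which is exactly the hypothesis needed for Lemma~\ref{lemma.paths} to apply to every pair. Summing the bound of Lemma~\ref{lemma.paths} over $Q\neq P$ and telescoping gives
\begin{align*}
    M(P)\;\geq\;\sum_{Q\neq P}\bigl(d_Q(u)+d_Q(v)-|V(Q)|+1\bigr)
    \;=\;\bigl(d(u)+d(v)\bigr)-d_P(u)-d_P(v)-\bigl(n-|V(P)|\bigr)+(|\mathcal{P}|-1),
\end{align*}
and applying $d(u)+d(v)\geq n$ yields the master inequality $M(P)\geq |V(P)|-d_P(u)-d_P(v)+|\mathcal{P}|-1$.

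From here the two branches of the definition of \emph{good} (Definition~\ref{def:good}) are handled separately. If $P$ is sociable (Definition~\ref{def:socint}), then $d_P(u)+d_P(v)+1\leq |V(P)|$, i.e. $|V(P)|-d_P(u)-d_P(v)\geq 1$, and the master inequality immediately gives $M(P)\geq |\mathcal{P}|$. If instead $P\in A(\mathcal{P})$ but is not sociable, the master inequality is too weak, since $|V(P)|-d_P(u)-d_P(v)$ may be non-positive. The remedy is to exploit the witness $P'\in\mathcal{P}$ with $|V(P')|\geq |V(P)|$ and $d_{P'}(u)+d_{P'}(v)=0$: for the single term $M(P,P')$ I would discard the (very negative) bound from Lemma~\ref{lemma.paths} and use only $M(P,P')\geq 0$, while keeping Lemma~\ref{lemma.paths} for all other $Q$. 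Note $P'\neq P$, because the endpoints of a path of length at least two each have a neighbor on the path, forcing $d_P(u)+d_P(v)\geq 2>0$. Repeating the telescoping over $Q\notin\{P,P'\}$ and using $d_{P'}(u)+d_{P'}(v)=0$ gives
\begin{align*}
    M(P)\;\geq\;|V(P)|+|V(P')|-d_P(u)-d_P(v)+|\mathcal{P}|-2\;\geq\;2|V(P)|-d_P(u)-d_P(v)+|\mathcal{P}|-2,
\end{align*}
where the last step uses $|V(P')|\geq |V(P)|$. Finally the trivial bound $d_P(u)+d_P(v)\leq 2(|V(P)|-1)$ gives $2|V(P)|-d_P(u)-d_P(v)\geq 2$, and hence $M(P)\geq |\mathcal{P}|$.

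The routine part is the telescoping bookkeeping; the genuinely delicate point is the $A(\mathcal{P})$ case, where the whole argument turns on the idea of \emph{throwing away} the Lemma~\ref{lemma.paths} estimate for the one pair $(P,P')$ and replacing it by the trivial $M(P,P')\geq 0$. This is precisely what recovers the large additive term $|V(P')|\geq |V(P)|$ that compensates for $P$ failing to be sociable, and it is also the step that requires the auxiliary observations $P'\neq P$ and $d_P(u)+d_P(v)\leq 2(|V(P)|-1)$ to be handled carefully so that the two cases combine into the single clean bound $M(P)\geq|\mathcal{P}|$.
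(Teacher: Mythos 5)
Your proposal is correct and follows essentially the same route as the paper: sum the bound of Lemma~\ref{lemma.paths} over the cover, use $d(u)+d(v)\geq n$ and $\sum_{Q}|V(Q)|=n$, handle the sociable case directly, and in the $A(\mathcal{P})$ case drop the term $M(P,P')$ (replacing it by $M(P,P')\geq 0$) so that the witness $P'$ contributes $|V(P')|\geq |V(P)|$, which together with $d_P(u)+d_P(v)\leq 2(|V(P)|-1)$ closes the bound. Your explicit remarks that $P'\neq P$ and that all paths have at least two vertices are details the paper leaves implicit, but they do not change the argument.
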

\ifnum\arxiv=1
\begin{proof}
Let $P=(u,\ldots,v)$.
We first consider the case in which $P$ is sociable. Namely, the case in which $d_P(u) + d_P(v) -|V(P)| +1 \leq 0$.
By Lemma~\ref{lemma.paths}, 
\begin{align*}
M(P) \geq& 
\sum_{Q\in \mathcal{P}\setminus\{P\}} \left( d_Q(u) + d_Q(v) -|V(Q)| +1 \right) 
\\
=&\left(\sum_{Q\in \mathcal{P}} \left(d_Q(u) + d_Q(v) -|V(Q)| +1 \right) \right) \\
&- \left( d_P(u) + d_P(v) -|V(P)| +1\right).
\end{align*}
\sloppy Since $\sum_{Q\in \mathcal{P}} \left(d_Q(u) + d_Q(v)\right) = d(u) + d(v) \geq n$ and $\sum_{Q\in \mathcal{P}} |V(Q)| = n$ it follows that 
\begin{align}\label{eq.P}
\sum_{Q\in \mathcal{P}} \left(d_Q(u) + d_Q(v) -|V(Q)| +1 \right) \geq |\mathcal{P}|.
\end{align}
Thus, $M(P) \geq  |\mathcal{P}|- \left( d_P(u) + d_P(v) -|V(P)| +1\right)$. 
Therefore, the claim follows from the fact that $P$ is sociable. 

We now consider $P\in A(\mathcal{P})$. 
Let $P'$ denote the path in $\mathcal{P}$ such that $|V(P')| \geq |V(P)|$ and $d_{P'}(u) + d_{P'}(v) = 0$.
\begin{align*}
\sum_{Q\in \mathcal{P}\setminus \{P, P'\}} M(P, Q) \geq& 
\sum_{Q\in \mathcal{P}\setminus\{P, P'\}} \left( d_Q(u) + d_Q(v) -|V(Q)| +1 \right) 
\\
=&\left(\sum_{Q\in \mathcal{P}} \left(d_Q(u) + d_Q(v) -|V(Q)| +1 \right) \right) \\
&
- \left( d_P(u) + d_P(v) -|V(P)| +1\right)\\
&- \left( d_{P'}(u) + d_{P'}(v) -|V(P')| +1\right).
\end{align*}
Since $d_{P'}(u) + d_{P'}(v) = 0$ and $d_P(u) + d_P(v) \leq 2(|V(P)| -1)$ it follows that 
$\left( d_P(u) + d_P(v) -|V(P)| +1\right)
+ \left( d_{P'}(u) + d_{P'}(v) -|V(P')| +1\right)\leq 0$.

Thus by Equation~\eqref{eq.P}, 
$M(P) \geq \sum_{Q\in \mathcal{P}\setminus \{P, P'\}} M(P, Q) \geq |\mathcal{P}|
$, 
as desired. This concludes the proof of the lemma.
\end{proof}
\fi


\subsection{Expected Number of Merges}
In this section, we prove Claim~\ref{claim:expapxmatch} which states that the expected number of merges of each iteration is sufficiently large.

More specifically, for a fixed iteration $i$, we prove that if the number of cycle-merges performed at Step~\ref{step.cyc} is below some threshold (specifically if  $|\mathcal{L}_i| \leq |\mathcal{P}_i|/12$), then the expected size of $M_i$ (the set of concatenation merges and elementary merges added in Step~\ref{step:pick}) is a constant fraction of the size of $\mathcal{P}_i$. We first prove the following claim which gives a lower bound on the probability that a good path receives a useful reservation. 

\begin{claim}\label{clm:single}
Fix an iteration $i$. For any $P$ which is good with respect to $A(\mathcal{P}^a_i)$, it holds that $P$ receives a useful reservation for at least one element with probability at least $1/3$. 
\end{claim}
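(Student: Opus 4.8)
The plan is to establish the claim by quantifying, for a good path $P$, the probability that at least one of its reserved elements turns out to be useful. The key structural input is Lemma~\ref{lemma:leftdeg}, which guarantees that a good path $P$ satisfies $M(P) \geq |\mathcal{P}^a_i|$. Recall that $M(P) = \sum_{Q} M(P,Q)$ counts the total number of (edge, path) and (endpoint, path) incidences along which $P$ can be merged into some other path. I would first translate this aggregate bound into a statement about the individual reservation events. Each element of $P$ (an endpoint or an edge) independently reserves itself to an endpoint of another path chosen uniformly at random from the relevant candidate set $D_i(\cdot)$. A reservation becomes useful for $P$ precisely when the random choice lands on a merge-compatible endpoint; so for a fixed element of $P$, the probability of usefulness is the ratio of compatible endpoints to $|D_i(\cdot)|$, and the expected total number of useful reservations is controlled from below by $M(P)$ divided by the number of endpoints in the path cover.

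First I would set up the counting carefully. There are at most $2|\mathcal{P}^a_i|$ endpoints overall (two per path), so each uniform draw from $D_i(\cdot)$ lands on a given compatible endpoint with probability at least $1/(2|\mathcal{P}^a_i|)$. Summing the per-element success probabilities over all elements of $P$, the expected number of useful reservations is at least $M(P)/(2|\mathcal{P}^a_i|) \geq 1/2$ by Lemma~\ref{lemma:leftdeg}. An expectation of $1/2$ does not by itself give a constant lower bound on the probability that at least one useful reservation occurs, so the real work is to convert an expectation bound into a probability bound. The cleanest route is to observe that the reservation choices of distinct elements of $P$ are independent, so the event ``no useful reservation'' is a product of independent ``miss'' events, and I can bound the probability of a total miss from above using $\prod(1-p_j) \leq \exp(-\sum p_j) = \exp(-\tfrac{1}{2})$, which already yields a success probability of $1 - e^{-1/2} \approx 0.39 > 1/3$. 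I would double-check whether the independence across elements holds as stated in the algorithm (each $u_j$ and each endpoint draws independently), since that is exactly what licenses the product form.

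The main obstacle I anticipate is the bookkeeping that connects the abstract quantity $M(P,Q)$ to the concrete reservation mechanism of the algorithm. The subtlety is that $M(P,Q)$ counts both elementary-merge edges and concatenation vertices, but the reservation step treats edges and endpoints asymmetrically: an endpoint of $P$ reserves itself to a random endpoint of another path, whereas an edge $(u_j,u_{j+1})$ of $P$ reserves itself to a random endpoint adjacent to $u_j$ only, and usefulness of that edge additionally requires the \emph{other} endpoint of the chosen path to be adjacent to $u_{j+1}$ (as illustrated in Figure~\ref{fig:reserve}). This means the per-edge success probability is not simply the number of merge-compatible configurations over $|D_i(u_j)|$; I must argue that each elementary-merge opportunity counted by $M(P,Q)$ corresponds to a draw outcome that makes the reservation useful, and that these correspond to distinct random outcomes so the counting does not double-count. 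I would handle this by carefully partitioning the contributions to $M(P)$ according to which element of $P$ they are attached to, matching each contribution to the appropriate reservation event, and verifying that the candidate set size in each denominator is at most $2|\mathcal{P}^a_i|$ (or the relevant $|D_i(\cdot)|$, which is itself at most this bound).

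Finally, I would assemble these pieces: a lower bound $\sum_j p_j \geq M(P)/(2|\mathcal{P}^a_i|) \geq 1/2$ on the expected number of useful reservations, independence across the elements of $P$, and the inequality $1 - \prod_j(1-p_j) \geq 1 - e^{-\sum_j p_j} \geq 1 - e^{-1/2} > 1/3$ to conclude. If the constant turns out tighter than needed, the $1/3$ threshold gives comfortable slack, so I would not worry about optimizing the constant; the emphasis should be on making the element-by-element correspondence between merge opportunities and useful-reservation events airtight, since that is where an error could silently inflate the success probability.
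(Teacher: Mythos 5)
Your probabilistic engine---many independent uniform draws, each hitting a designated target with probability at least $1/(2|\mathcal{P}^a_i|)$, at least $|\mathcal{P}^a_i|$ such draws, and the bound $1-\prod_j(1-p_j)\geq 1-e^{-1/2}>1/3$---is exactly the one the paper uses, but you have attached the random events to the wrong side of the reservation mechanism, and this is a genuine gap rather than a notational slip. The claim concerns reservations that $P$ \emph{receives}: by Steps~\ref{step:pi}--\ref{step:di} and the definition of $S_i(P)$, these are made by edges and endpoints of \emph{other} paths whose random draw lands on an endpoint of $P$. You instead analyze the draws made by the elements of $P$ itself; such a draw, when it lands on a compatible endpoint of another path $Q$, produces a reservation that is useful for $Q$ (it grants $Q$ the right to merge into $P$), not for $P$ --- this is precisely the situation depicted in Figure~\ref{fig:reserve}. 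Consequently, the quantity governing your events is $\sum_Q M(Q,P)$ (the number of ways other paths can merge into $P$), not $M(P)=\sum_Q M(P,Q)$, and Lemma~\ref{lemma:leftdeg} gives no lower bound on the former: goodness of $P$ constrains the degrees of $P$'s endpoints into other paths and says nothing about the degrees of other paths' endpoints into $P$. So the chain ``expected number of useful reservations $\geq M(P)/(2|\mathcal{P}^a_i|)\geq 1/2$'' does not hold for the events you defined; and even if it did, you would have bounded the probability that some \emph{other} path receives a useful reservation, which is not the statement of the claim.

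The repair is to run your argument on the elements counted by $M(P)$ themselves, which is what the paper does. Let $F$ be the set of edges and endpoints of other paths along which $P$ can be merged or concatenated; by Lemma~\ref{lemma:leftdeg}, $|F|=M(P)\geq|\mathcal{P}^a_i|$. Each element of $F$ makes its own independent uniform draw from a candidate set $D_i(\cdot)$ of size at most $2|\mathcal{P}^a_i|$, and by the definition of $F$ that candidate set contains an endpoint of $P$ which, if drawn, makes the element reserved for $P$ \emph{and} automatically useful for $P$: usefulness is not random here, it is built into membership in $F$. This also dissolves the asymmetry you worried about in your third paragraph --- for an edge $\{w_j,w_{j+1}\}$ of $Q$ belonging to $F$, the adjacency of $P$'s other endpoint to $w_{j+1}$ is already guaranteed, so the only random requirement is that $w_j$'s draw lands on the appropriate endpoint of $P$, an event of probability at least $1/(2|\mathcal{P}^a_i|)$. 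Independence across the elements of $F$ then gives failure probability at most $\left(1-1/(2|\mathcal{P}^a_i|)\right)^{|\mathcal{P}^a_i|}\leq e^{-1/2}$, hence success probability at least $1-e^{-1/2}>1/3$.
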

\begin{proof}
Fix an iteration $i$ and let $P$ be a good path with respect to $A(\mathcal{P}^a_i)$. 
Let $F$ denote the set of edges and endpoints that $P$ can be merged to (see Definitions~\ref{def:elem}, \ref{def:concat}) in $\mathcal{P}^a_i$.
By Lemma~\ref{lemma:leftdeg}, it holds that $|F| \geq |\mathcal{P}^a_i|$.
Let $x=2|\mathcal{P}^a_i|$ denote the total number of endpoints of paths in $\mathcal{P}^a_i$.
Since every edge and endpoint in $F$ is reserved for $P$ with a probability of at least $1/x$, the probability that $P$ received at least one reservation of an element in $F$ is at least $1 -(1-1/x)^{|F|}$.
since $(1-1/x)^{|F|} \leq (1-1/(2|\mathcal{P}^a_i|))^{|\mathcal{P}^a_i|} \leq 1/\sqrt{e}$, it holds that this probability is at least $1/3$.
\end{proof}
\begin{claim}\label{claim:expapxmatch}
If $\mathcal{L}_i \leq |\mathcal{P}_i|/12$ then $\Exp(|M_i| ~\mid~ \mathcal{P}_i) \geq  |\mathcal{P}_i|/36$.
\end{claim}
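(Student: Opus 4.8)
The plan is to reduce the bound on $\Exp(|M_i| \mid \mathcal{P}_i)$ to a count of \emph{good} paths in $\mathcal{P}^a_i$ and then apply Claim~\ref{clm:single}. First observe that in Step~\ref{step:pick} every path $P$ with $S_i(P)\neq\emptyset$ contributes exactly one element to $M_i$, and since each reserved element is reserved exclusively to the endpoint of a single target path, the elements contributed by distinct paths are distinct. Hence $|M_i| = \bigl|\{P\in\mathcal{P}^a_i : S_i(P)\neq\emptyset\}\bigr|$, and by linearity of expectation
\begin{equation*}
\Exp(|M_i| \mid \mathcal{P}_i) \;=\; \sum_{P\in\mathcal{P}^a_i}\Pr[S_i(P)\neq\emptyset]\;\geq\; \sum_{P \text{ good w.r.t. } \mathcal{P}^a_i}\Pr[S_i(P)\neq\emptyset]\;\geq\; \tfrac{1}{3}\cdot\bigl|\{P\in\mathcal{P}^a_i : P \text{ good}\}\bigr|,
\end{equation*}
where the last inequality is Claim~\ref{clm:single}. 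So it suffices to lower bound the number of good paths in $\mathcal{P}^a_i$ by a constant fraction of $|\mathcal{P}_i|$.

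The key combinatorial step, which I expect to be the main obstacle, is to show that \emph{for every pair $(P,Q)\in\mathcal{I}_i\setminus\mathcal{L}_i$ at least one of $P,Q$ is good with respect to $\mathcal{P}^a_i$}. Since such a pair was not cycle-merged in Step~\ref{step.cyc}, by the definition of $\mathcal{L}_i$ either (i) at least one of $P,Q$ is not cycled, or (ii) both are cycled but there is no edge connecting $P$ and $Q$. In case (i), the non-cycled path is sociable by Claim~\ref{lemma:basic}, hence good by Definition~\ref{def:good}. In case (ii), I would argue via the set $A(\cdot)$: assume w.l.o.g.\ $|V(P)|\geq |V(Q)|$ and write $Q=(a,\ldots,b)$; since no edge joins $P$ and $Q$, the endpoints $a,b$ of $Q$ have no neighbor in $P$, i.e.\ $d_P(a)+d_P(b)=0$. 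Taking $P$ as the witness path shows $Q\in A(\mathcal{P}^a_i)$ (note $P,Q$ both survive unchanged into $\mathcal{P}^a_i$), so $Q$ is good. This case analysis is delicate precisely because it must interface the cycle-merge condition with the $A(\cdot)$-witness requirement, but both cases close cleanly.

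Finally I would count. Since distinct pairs partition $\mathcal{P}_i$ into vertex-disjoint pairs, the good paths extracted from distinct unmerged pairs are distinct, so the number of good paths in $\mathcal{P}^a_i$ is at least $|\mathcal{I}_i\setminus\mathcal{L}_i| = |\mathcal{I}_i| - |\mathcal{L}_i|$. Using $|\mathcal{I}_i| = \lfloor |\mathcal{P}_i|/2\rfloor \geq (|\mathcal{P}_i|-1)/2$ together with the hypothesis $|\mathcal{L}_i|\leq |\mathcal{P}_i|/12$, the number of good paths is at least $\tfrac{|\mathcal{P}_i|-1}{2} - \tfrac{|\mathcal{P}_i|}{12} = \tfrac{5|\mathcal{P}_i|}{12} - \tfrac12$. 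Plugging into the expectation bound gives
\begin{equation*}
\Exp(|M_i| \mid \mathcal{P}_i) \;\geq\; \tfrac{1}{3}\left(\tfrac{5|\mathcal{P}_i|}{12} - \tfrac12\right) \;=\; \tfrac{5|\mathcal{P}_i|}{36} - \tfrac16 \;\geq\; \tfrac{|\mathcal{P}_i|}{36},
\end{equation*}
where the last inequality holds whenever $|\mathcal{P}_i|\geq 2$ (and for $|\mathcal{P}_i|=1$ the algorithm has already produced a single Hamiltonian path, so the statement is vacuous). The comfortable slack in the final inequality confirms that the stated constant $1/36$ is conservative and that no sharper estimate of $|\mathcal{I}_i|$ is needed.
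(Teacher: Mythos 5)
Your proof is correct and follows essentially the same route as the paper's: each pair in $\mathcal{I}_i\setminus\mathcal{L}_i$ contains a good path (via the same two-case split into Claim~\ref{lemma:basic}/sociable and the $A(\mathcal{P}^a_i)$-witness argument), followed by Claim~\ref{clm:single} and a counting step. The only differences are presentational — you make the $A(\cdot)$-witness and the exclusivity of reservations explicit, and you keep the exact count $(|\mathcal{P}_i|-1)/2$ rather than the paper's cruder bound $|\mathcal{P}_i|/4$, which just shows the constant $1/36$ has slack.
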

\begin{proof}
We first observe that in every pair $(P, Q) \in \mathcal{I}_i\setminus \mathcal{L}_i$ at least one path is good with respect to $\mathcal{P}^a_i$. To see this, observe that there are two cases. The first case is that $P$ and $Q$ are not connected with an edge. This implies that at least one of them is in $A(\mathcal{P}^a_i)$ The second case is that at least one of them is not cycled, which by Claim~\ref{lemma:basic} implies that at least one of them is sociable, as desired.                                       
We next lower bound the number of pairs in $\mathcal{I}_i\setminus \mathcal{L}_i$. Since the number of pairs is at least $(|\mathcal{P}_i|-1)/2 \geq |\mathcal{P}_i|/4$ and $|\mathcal{L}_i| < |\mathcal{P}_i|/12$ it holds that $|\mathcal{I}_i\setminus \mathcal{L}_i| \geq |\mathcal{P}_i|/4 - |\mathcal{P}_i|/12 = |\mathcal{P}_i|/6$.

Therefore, by Claim~\ref{clm:single}, the expected size of merges that are added to $M_i$ is at least $|\mathcal{I}_i\setminus \mathcal{L}_i|/2 \cdot (1/3) \geq |\mathcal{P}_i|/36$, as required.
\end{proof}

\subsection{The Progress of Each Iteration}\label{sec:progress}

In this section, we prove the following lemma.
\begin{lemma}\label{lemma:iter}
    For any iteration $i$ of   Algorithm~\ref{alg:hamalg} it holds that   $\Exp(|\mathcal{P}_{i+1}|) \leq (1-1/144)\cdot \Exp(|\mathcal{P}_i|)$.
\end{lemma}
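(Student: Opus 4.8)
The plan is to bound the conditional expectation $\Exp(|\mathcal{P}_{i+1}| \mid \mathcal{P}_i) \leq (1-1/144)|\mathcal{P}_i|$ for every fixed value of $\mathcal{P}_i$ (conditioning also on the pairing of Step~\ref{step:pair}, so that $\mathcal{I}_i$ and $\mathcal{L}_i$ are determined), and then obtain the stated inequality by the tower rule, $\Exp(|\mathcal{P}_{i+1}|) = \Exp\left[\Exp(|\mathcal{P}_{i+1}| \mid \mathcal{P}_i)\right] \leq (1-1/144)\,\Exp(|\mathcal{P}_i|)$. First I would record the deterministic bookkeeping. The cycle merges of Step~\ref{step.cyc} act on the $|\mathcal{L}_i|$ disjoint pairs of $\mathcal{L}_i$, each pair collapsing into a single path, so $|\mathcal{P}^a_i| = |\mathcal{P}_i| - |\mathcal{L}_i|$. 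Writing $M_i'$ for the sub-collection of merges in $M_i$ that are actually executed in Step~\ref{step.orientmerge}, this gives the exact identity $|\mathcal{P}_{i+1}| = |\mathcal{P}_i| - |\mathcal{L}_i| - |M_i'|$, provided each executed merge decreases the number of paths by exactly one; I justify this last point at the end.

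I would then split on the size of $\mathcal{L}_i$. In the first case, $|\mathcal{L}_i| > |\mathcal{P}_i|/12$, and since $|M_i'| \geq 0$, the cycle merges alone already give the deterministic bound $|\mathcal{P}_{i+1}| \leq |\mathcal{P}_i| - |\mathcal{L}_i| < (1-1/12)|\mathcal{P}_i| \leq (1-1/144)|\mathcal{P}_i|$, using $1/12 \geq 1/144$. In the second case, $|\mathcal{L}_i| \leq |\mathcal{P}_i|/12$, so Claim~\ref{claim:expapxmatch} applies and yields $\Exp(|M_i| \mid \mathcal{P}_i) \geq |\mathcal{P}_i|/36$. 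It then remains to pass from the candidate merges $M_i$ to the executed merges $M_i'$.

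The crux of the argument, and the step I expect to require the most care, is the coin-toss filtering. A candidate merge of $P$ into $Q$ recorded in $M_i$ is executed iff $y_P = \tails$ and $y_Q = \heads$. The coins $\{y_P\}$ are fresh fair coins tossed in Step~\ref{step.toss}, independent of the reservations and selections that determine $M_i$; hence, conditioned on $M_i$, each individual candidate merge survives with probability exactly $\Pr[y_P=\tails]\cdot\Pr[y_Q=\heads]=1/4$. By linearity of expectation (no independence across merges is needed), $\Exp(|M_i'| \mid M_i,\mathcal{P}_i) = |M_i|/4$, so taking expectations and invoking Claim~\ref{claim:expapxmatch} gives $\Exp(|M_i'| \mid \mathcal{P}_i) = \tfrac14\Exp(|M_i| \mid \mathcal{P}_i) \geq |\mathcal{P}_i|/144$. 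Combining this with the identity above and $|\mathcal{L}_i|\geq 0$, I get $\Exp(|\mathcal{P}_{i+1}| \mid \mathcal{P}_i) = |\mathcal{P}_i| - |\mathcal{L}_i| - \Exp(|M_i'| \mid \mathcal{P}_i) \leq (1-1/144)|\mathcal{P}_i|$, settling the second case and, with the first case, the pointwise bound.

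Finally, I would verify the claim underlying the count identity: that the executed merges can all be realized simultaneously and that each lowers the path count by exactly one. Each path contributes at most one merge to $M_i$ as the absorbed (merger) path, via the single element it picks in Step~\ref{step:pick}, and distinct merges use distinct reserved elements by the exclusivity of the reservations, so two executed merges never contend for the same target element. The orientation rule prevents both chains and conflicts: an absorbed path tossed $\tails$ while any path receiving merges tossed $\heads$, so no path is simultaneously a merger and a target. Consequently the executed merges are disjoint, independently realizable operations, each absorbing one distinct path and reducing $|\mathcal{P}^a_i|$ by one; this legitimizes $|\mathcal{P}_{i+1}| = |\mathcal{P}^a_i| - |M_i'|$ and closes the argument.
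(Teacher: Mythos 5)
Your proof is correct and follows essentially the same route as the paper's: the same case split on whether $|\mathcal{L}_i|$ exceeds $|\mathcal{P}_i|/12$, the same invocation of Claim~\ref{claim:expapxmatch} followed by the $1/4$ coin-toss filtering, and the same conditioning-plus-tower-rule conclusion. Your additional verifications (the exact count identity, the independence of the coins from $M_i$, and the simultaneous realizability of the executed merges) are details the paper treats only implicitly, but they do not change the argument.
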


\begin{proof}
Fix an iteration $i$. If $\mathcal{L}_i \geq |\mathcal{P}_i|/12$ then at least $|\mathcal{P}_i|/12$ paths of $\mathcal{P}_i$ are merged and so $|\mathcal{P}_{i+1}| \leq (1-1/12)|\mathcal{P}_i|$, as desired.

Otherwise, by Claim~\ref{claim:expapxmatch}, $\Exp(|M_i|) \geq |\mathcal{P}_i|/36$.
Consider a merge operation in $M_i$ in which path $P$ is merged into path $Q$. This merge is carried in Step~\ref{step.orientmerge} only if $y_P = \tails{}$ and $y_Q = \heads{}$, which happens with probability $1/4$. We denote these merge operations by $M'_i$, hence $\Exp(|M'_i|)|\geq \mathcal{P}_i|/(36\cdot 4) =  |\mathcal{P}_i|/144$. Moreover, since $|\mathcal{P}_{i+1}| \leq |\mathcal{P}_i|-|M'_i|$ (recall that merges can occur before Step~\ref{step.orientmerge}), it follows that $\Exp(|\mathcal{P}_{i+1}|~\mid~\mathcal{P}_{i}) \leq |\mathcal{P}_i|-\Exp(|M'_i|~\mid~\mathcal{P}_{i})\leq (1-1/144)\cdot |\mathcal{P}_{i}|$.

The lemma follows since
\[
    \Exp(|\mathcal{P}_{i+1}|)=\sum_{\mathcal{P}_{i}}\Pr(\mathcal{P}_{i})\cdot\Exp(|M'_i|~\mid~\mathcal{P}_{i})\leq \sum_{\mathcal{P}_{i}}\Pr(\mathcal{P}_{i})\cdot(1-1/144)\cdot |\mathcal{P}_{i}|=(1-1/144)\cdot\Exp(|\mathcal{P}_{i}|)\:,
\] as required. 
\end{proof}
%
%
%

\subsubsection{Number of Iterations}
In this section, we prove that if the for-loop in Step~\ref{step.for} of Algorithm~\ref{alg:hamalg} performs $\Theta(\log n)$ iterations, then the path that is returned at the end of the algorithm is Hamiltonian w.h.p. 

We note that, although the algorithm uses independent coin tosses between different  iterations, the success of two different iterations are random variables that may be dependent. Therefore we cannot use concentration bounds that assume the independence of the random variables (such as Chernoff's bound).
Roughly speaking, the dependence comes from the fact that the constructed path cover depends on the coin tosses of previous iterations.
Nonetheless, we can show that $\Theta(\log n)$ iterations are sufficient. 
The proof of our main theorem (Theorem~\ref{thm:ham}) follows directly from the following lemma. 

%
\begin{lemma}\label{lemma:numofiter}
    The path returned in Step~\ref{step.return} is Hamiltonian w.h.p. 
\end{lemma}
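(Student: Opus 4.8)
The plan is to reduce the statement to the event $\{|\mathcal{P}_\ell| = 1\}$: a path-cover of size one is a single path through all $n$ vertices, i.e.\ a Hamiltonian path. Since the number of paths never increases during the execution (every merge strictly reduces the count and no operation splits a path), $|\mathcal{P}_i|$ is non-increasing and, once it reaches $1$, is absorbed there. Crucially, I would bound $\Pr(|\mathcal{P}_\ell| \geq 2)$ using only a first-moment (Markov) argument. This is exactly what lets us bypass the dependence between iterations flagged before the lemma: the law of total expectation composes the per-iteration bound of Lemma~\ref{lemma:iter} regardless of how the coin tosses of different iterations correlate, so no independence (and hence no Chernoff-type) hypothesis is needed.

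The first step is to pass from Lemma~\ref{lemma:iter} to a clean geometric recurrence on the shifted potential $|\mathcal{P}_i|-1 \geq 0$. Writing $\eta = 1/144$, I would verify the conditional bound $\Exp(|\mathcal{P}_{i+1}|-1 \mid \mathcal{P}_i) \leq (1-\eta)(|\mathcal{P}_i|-1)$ in the two regimes: when $|\mathcal{P}_i|=1$ both sides are $0$ (the single path is never altered), and when $|\mathcal{P}_i| \geq 2$ the conditional form of Lemma~\ref{lemma:iter} gives $\Exp(|\mathcal{P}_{i+1}|\mid\mathcal{P}_i)\leq (1-\eta)|\mathcal{P}_i|$, whence $\Exp(|\mathcal{P}_{i+1}|-1\mid\mathcal{P}_i)\leq (1-\eta)|\mathcal{P}_i|-1 \leq (1-\eta)(|\mathcal{P}_i|-1)$ because $-1 \leq -(1-\eta)$. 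Taking expectations and iterating yields
\[
\Exp(|\mathcal{P}_\ell|-1) \leq (1-\eta)^\ell \, \Exp(|\mathcal{P}_0|-1) \leq (1-\eta)^\ell \cdot n,
\]
using that the initial cover has paths of length at least $2$, so $|\mathcal{P}_0| \leq n/2$.

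It remains to choose the number of iterations and conclude. Setting $\ell = \Theta(\log n)$ large enough that $(1-\eta)^\ell \cdot n \leq 1/\poly(n)$ --- concretely $\ell \geq (c+1)\ln n / \ln\!\big(1/(1-\eta)\big)$ makes the bound at most $n^{-c}$, and since $\eta$ is a constant this is $\ell=\Theta(\log n)$ --- gives $\Exp(|\mathcal{P}_\ell|-1) \leq n^{-c}$. Because $|\mathcal{P}_\ell|-1$ is a non-negative integer, Markov's inequality yields $\Pr(|\mathcal{P}_\ell|\geq 2) = \Pr(|\mathcal{P}_\ell|-1 \geq 1) \leq \Exp(|\mathcal{P}_\ell|-1) \leq n^{-c}$, so w.h.p.\ $|\mathcal{P}_\ell|=1$ and the returned path is Hamiltonian. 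The only delicate point --- and the reason a naive iteration of Lemma~\ref{lemma:iter} is not quite enough --- is the absorbing boundary at $|\mathcal{P}_i|=1$: the raw multiplicative bound $\Exp(|\mathcal{P}_{i+1}|)\leq(1-\eta)\Exp(|\mathcal{P}_i|)$ cannot drive an expectation that is always at least $1$ below $1$, so one must either track the shifted potential $|\mathcal{P}_i|-1$ (as above) or restrict the decrease to iterations with $|\mathcal{P}_i|\geq 2$. This, together with committing to Markov rather than a concentration bound, is where the care goes; everything else is a routine geometric-sum estimate.
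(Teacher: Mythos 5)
Your proof is correct, and at the top level it follows the same route as the paper: compose the per-iteration expected decrease of Lemma~\ref{lemma:iter} through the law of total expectation to get geometric decay over $\ell=\Theta(\log n)$ iterations, then finish with a first-moment (Markov) bound, exactly to sidestep the inter-iteration dependence. The genuine difference is your shifted potential $|\mathcal{P}_i|-1$, and it is not cosmetic: the paper iterates Lemma~\ref{lemma:iter} unconditionally and concludes $\Exp(|\mathcal{P}_{\ell}|)\leq 1/n$, which is literally impossible, since a path cover always contains at least one path and hence $\Exp(|\mathcal{P}_{\ell}|)\geq 1$. Indeed, Lemma~\ref{lemma:iter} cannot hold as stated once the process reaches $|\mathcal{P}_i|=1$; the underlying Claim~\ref{claim:expapxmatch} uses the estimate $(|\mathcal{P}_i|-1)/2\geq|\mathcal{P}_i|/4$, which requires $|\mathcal{P}_i|\geq 2$, so the conditional inequality $\Exp(|\mathcal{P}_{i+1}|\mid\mathcal{P}_i)\leq(1-1/144)\,|\mathcal{P}_i|$ proved there is only valid on the event $\{|\mathcal{P}_i|\geq 2\}$. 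Your observation that $-1\leq-(1-\eta)$ makes the shifted recurrence $\Exp(|\mathcal{P}_{i+1}|-1\mid\mathcal{P}_i)\leq(1-\eta)(|\mathcal{P}_i|-1)$ hold in both regimes, absorbing state included, after which Markov applied to the nonnegative integer variable $|\mathcal{P}_{\ell}|-1$ gives $\Pr(|\mathcal{P}_{\ell}|\geq 2)\leq n^{-c}$. So your write-up is best described as a corrected version of the paper's argument rather than a different one: what it buys is that every displayed inequality is actually true, at the cost of one elementary extra verification, and the paper's final step $\Pr(|\mathcal{P}_{\ell}|>1)\leq\Exp(|\mathcal{P}_{\ell}|)$ goes through verbatim with the shifted quantity, which is precisely the repair you carried out.
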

\begin{proof}
    Lemma~\ref{lemma:iter} and the fact that $|\mathcal{P}_0| \leq n$ imply that 
    \[
        \Exp(|\mathcal{P}_{\ell}|) \leq (1-1/144)^{\ell}\cdot \Exp(|\mathcal{P}_0|)\leq (1-1/144)^{\ell}\cdot n\:.
    \]
    It follows that for $\ell \geq \frac{2}{\log_2 (144/143)}\cdot \log_2 n$, it holds that $\Exp(|\mathcal{P}_{\ell}|) \leq \frac{1}{n}$. 

    Thus, by Markov's Inequality, it follows that the probability that the Algorithm fails is 
    \[
        \Pr(|\mathcal{P}_{\ell}|>1) \leq \frac{\Exp(|\mathcal{P}_{\ell}|)}{1} \leq \frac{1}{n}\:,
    \] as required.
\end{proof}
%
%
%
%
%
\ifnum\arxiv=1
\paragraph*{Martingales and Azuma-Hoeffding Inequality~\cite{MU05}.} \sloppy A sequence of random variables $Z_0, Z_1, \ldots$ is called \emph{martingale} if for all $n\geq 0$ it holds: $\Exp{(|Z_n|)}<\infty$, and $\Exp{(Z_{n+1} \mid Z_0, \ldots, Z_n)}=Z_n$. 
The following theorem states a simplified version of the Azuma-Hoeffding Inequality. 
\begin{theorem}[Coro.~12.5~\cite{MU05}, restated]\label{thm:azuma}
    Let $Z_0, Z_1, \ldots$ a martingale. If there is $c\geq 0$ for all $k\geq 1$ such that
    $|Z_k - Z_{k-1}| \leq c$.  
    Then, for all $t\geq 1$ and $\lambda \geq 0$,
    $
        \Pr{\left(Z_t-Z_0\geq \lambda c \sqrt{t}\right)}\leq e^{-\lambda^2/2}
    $. 
\end{theorem}

\paragraph*{Application of the Azuma-Hoeffding Inequality. }
In the following we bound the number of successful iterations (from above): we first define a sequence of random variables $Z_0, Z_1, \ldots$, and show that it is a martingale. We then apply the Azuma-Hoeffding Inequality and show that in $\Theta(\log n)$  iterations there are $\Theta(\log n)$ successful iterations, that is, Algorithm~\ref{alg:hamalg} is successful and returns a Hamiltonian path. 

For $i\geq 1$, let $Y_i$ be the random variable that takes the value $-1/p_i +1$ if the $i$-th iteration is successful and takes the value $1$ otherwise (note that $Y_i$ can take negative values). Let $Y_0 \triangleq  1$, and let $Z_i \triangleq \sum_{j=0}^i Y_j$, then 
$\E(Z_{k} \mid Z_0, \ldots Z_k) =  Z_{k-1} + 1\cdot (1-p_k) + (1-1/p_k)\cdot p_k = Z_{k-1}$,  
and $|Z_k - Z_{k-1}| \leq \frac{1}{\eta}$ for every $k\geq 1$. Hence, the sequence $Z_0, Z_1, \ldots$ is indeed a martingale. Applying Theorem~\ref{thm:azuma} with
$\lambda = \frac{\eta}{2}\sqrt{t}$, $c = \frac{1}{\eta}$, and $t = \frac{8}{\eta^2}\ln n$
yields $\Pr\left(Z_{t}-1 \geq \frac{\eta}{2}\sqrt{t}\cdot\frac{1}{\eta}\cdot\sqrt{t}\right) \leq \frac{1}{n}$. 
Hence,  
$\Pr{\left(Z_{t} - 1 < \frac{t}{2}\right)} \geq 1-\frac{1}{n}$. 
Let $s$ denote the number of successful iterations. Since for every $ i \geq 1$ it holds that $\frac{1}{p_i} \leq \frac{1}{\eta}$, then $Z_{t} - 1 \geq \left(t - s\right) + s \left(1-\frac{1}{\eta}\right) = t - \frac{s}{\eta}$.  
Thus, we obtain that $t - \frac{s}{\eta} < \frac{t}{2}$ w.p. at least $1-\frac{1}{n}$.
We conclude that with probability at least $1-\frac{1}{n}$,
 the number of successful iterations is at least
 $
    t\cdot \eta - \frac{t\cdot\eta}{2} = \frac{4}{\eta}\cdot \ln n \geq 4\log_{1/(1-\eta)} n$, 
where the last inequality follows since 
$\ln(x) \geq 1-\frac{1}{x}$, and since
    $\log_{1/(1-\eta)}n  = \frac{\ln n}{\ln \frac{1}{1-\eta}}$, 
    $\ln\left(\frac{1}{1-\eta}\right)  
     \geq 1 - \frac{1}{\frac{1}{1-\eta}} = \eta$, 
as required.
\fi 

%
%

\ifnum\arxiv=1
    \bibliographystyle{alpha}
\fi
\bibliography{hamcong.bib}
\appendix
\section{Other Related Work}\label{app:related}
\paragraph*{Parallel Algorithms. }
As mentioned above, Dahlhaus et al.~\cite{DHK93} gave a $O(\log^4n)$ \cpram\ algorithm that uses a linear number of processors to find a Hamiltonian cycle in Dirac graphs. Another generalization of Dirac graphs are \emph{Chv\'{a}tal graphs}. A graph is called a Chv\'{a}tal graph if its degree sequence $d_1 \leq d_2 \leq \cdots \leq d_n$ satisfies that for every $k < n/2$, $d_k \leq k$ implies that $d_{n-k} \geq n-k$. Chv\'{a}tal proved (see e.g.,~\cite{bollobas2004extremal}) that Chv\'{a}tal graphs are also Hamiltonian. In~\cite{bondy1976method} a sequential polynomial time algorithm that finds Hamiltonian cycles in Chv\'{a}tal graphs has been shown. S\'{a}rk\"{o}zy~\cite{sarkozy2009fast} proved that a deterministic $O(\log^4 n)$ time \epram\  algorithm with a polynomial number of processors that finds a Hamiltonian cycle in a \emph{$\eta$-Chv\'{a}tal graphs} exists. A graph is called $\eta$-Chv\'{a}tal graphs if for every $k < n/2$, $d_k \leq \min\{k+\eta n,n/2\}$, it holds that $d_{n-k-\eta n} \geq n-k$, where $0 < \eta < 1$.
\section{Deferred Proofs}\label{app:omittedpf}
\subsection{Proof of Lemma~\ref{lemma:basic}}
\begin{proof}
We prove the contra-positive claim.
Assume the $P$ is not sociable. 
Then  $d_{P}(u_{\ell})> (|V(P)|-1) - d_{P}(u_1)$.

Consider the $d_{P}(u_1)$ vertices on $P$, $u_j$, that $u_1$ is connected to.
For each such $u_j$, if $u_{\ell}$ is connected to $u_{j-1}$ then it follows that $P$ is cycled.
Otherwise, $d_{P}(u_{\ell})$ is at most $(|V(P)|-1) - d_{P}(u_1)$ in contradiction to our assumption.
\end{proof}
\subsection{Proof of Lemma~\ref{lemma.paths}}
\begin{proof}
    We consider two cases: (1)~$d(P,Q)=0$, and (2)~$d(P,Q)>0$.

    The case where $d(P, Q)=0$ follows from Lemma~\ref{lemma:lemma525}.

    Let us assume that  $d(P,Q)=1$, and that w.l.o.g. one of the endpoints of $P$ is adjacent to $a$. We add a new vertex $a'$ and a new edge $\{a',a\}$ to $Q$. Let $Q'$ denote the resulting path. Now, paths $P$ and $Q'$ satisfy the conditions of Lemma~\ref{lemma:lemma525}, i.e., $d(P,Q')=0$. It holds that $M(P,Q') = M(P,Q) - 1$ since by extending the path $Q$ we eliminate a single possible concatenation operation while not changing the possible merging operations. Now, due to the first case of this proof and since $d_{Q'}(u)=d_{Q}(u)$ and $d_{Q'}(v)=d_{Q}(v)$ we obtain that 
    \begin{align*}
        d_{Q'}(u)+d_{Q'}(v)-|V(Q)\cup \{a'\}|+1 &\leq M(P,Q') \Leftrightarrow \\
        d_{Q}(u)+d_{Q}(v)-|V(Q)|-1+1 &\leq M(P,Q)-1\:, 
    \end{align*} 
    as required.
    For the case where $d(P,Q)=2$ we simply apply the previous case twice. The lemma follows. 
\end{proof}
\subsection{Proof of Lemma~\ref{lemma:leftdeg}}
\begin{proof}
Let $P=(u,\ldots,v)$.
We first consider the case in which $P$ is sociable. Namely, the case in which $d_P(u) + d_P(v) -|V(P)| +1 \leq 0$.
By Lemma~\ref{lemma.paths}, 
\begin{align*}
M(P) \geq& 
\sum_{Q\in \mathcal{P}\setminus\{P\}} \left( d_Q(u) + d_Q(v) -|V(Q)| +1 \right) 
\\
=&\left(\sum_{Q\in \mathcal{P}} \left(d_Q(u) + d_Q(v) -|V(Q)| +1 \right) \right) \\
&- \left( d_P(u) + d_P(v) -|V(P)| +1\right).
\end{align*}
\sloppy Since $\sum_{Q\in \mathcal{P}} \left(d_Q(u) + d_Q(v)\right) = d(u) + d(v) \geq n$ and $\sum_{Q\in \mathcal{P}} |V(Q)| = n$ it follows that 
\begin{align}\label{eq.P}
\sum_{Q\in \mathcal{P}} \left(d_Q(u) + d_Q(v) -|V(Q)| +1 \right) \geq |\mathcal{P}|.
\end{align}
Thus, $M(P) \geq  |\mathcal{P}|- \left( d_P(u) + d_P(v) -|V(P)| +1\right)$. 
Therefore, the claim follows from the fact that $P$ is sociable. 

We now consider $P\in A(\mathcal{P})$. 
Let $P'$ denote the path in $\mathcal{P}$ such that $|V(P')| \geq |V(P)|$ and $d_{P'}(u) + d_{P'}(v) = 0$.
\begin{align*}
\sum_{Q\in \mathcal{P}\setminus \{P, P'\}} M(P, Q) \geq& 
\sum_{Q\in \mathcal{P}\setminus\{P, P'\}} \left( d_Q(u) + d_Q(v) -|V(Q)| +1 \right) 
\\
=&\left(\sum_{Q\in \mathcal{P}} \left(d_Q(u) + d_Q(v) -|V(Q)| +1 \right) \right) \\
&
- \left( d_P(u) + d_P(v) -|V(P)| +1\right)\\
&- \left( d_{P'}(u) + d_{P'}(v) -|V(P')| +1\right).
\end{align*}
Since $d_{P'}(u) + d_{P'}(v) = 0$ and $d_P(u) + d_P(v) \leq 2(|V(P)| -1)$ it follows that 
$\left( d_P(u) + d_P(v) -|V(P)| +1\right)
+ \left( d_{P'}(u) + d_{P'}(v) -|V(P')| +1\right)\leq 0$.

Thus by Equation~\eqref{eq.P}, 
$M(P) \geq \sum_{Q\in \mathcal{P}\setminus \{P, P'\}} M(P, Q) \geq |\mathcal{P}|
$, as desired. This concludes the proof of the lemma.
\end{proof}
\section{A Detailed Description of the Distributed Implementation of Algorithm~\ref{alg:hamalg}}\label{sec:imp}

In this section, we elaborate more on how each algorithm step is implemented in the distributed \congest\ model. We describe the construction and maintenance of spanning trees, which our algorithm implicitly uses.

\subsection{Spanning Trees}\label{sec:spanningtrees}
Algorithm~\ref{alg:hamalg} implicitly uses spanning trees for coordination of the paths and coordination of vertices within a path. In this section, we describe the spanning trees the algorithm uses and how it constructs and maintains them. We begin with the following basic claim.

\begin{claim}\label{clm.clm15}
Let $G=(V, E)$ be a graph. Then for any pair of vertices $u$, $v$ such that $d(u)+ d(v) \geq n$ it holds that $\delta(u, v) \leq 2$. 
\end{claim}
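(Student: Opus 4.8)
The plan is to split on whether $u$ and $v$ are adjacent. If $\{u,v\}\in E$ then there is an edge directly joining them, so $\delta(u,v)=1\leq 2$ and the claim holds trivially. Thus the entire content of the claim lies in the non-adjacent case, where I must exhibit a common neighbor of $u$ and $v$, which yields a path $u$--$w$--$v$ of length $2$.

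For the non-adjacent case my approach is a short counting argument on the two neighborhoods. First I would note that since $G$ is a simple graph we have $u\notin N(u)$ and $v\notin N(v)$, and since $\{u,v\}\notin E$ we also have $v\notin N(u)$ and $u\notin N(v)$. Consequently both $N(u)$ and $N(v)$ are subsets of $V\setminus\{u,v\}$, a set of size $n-2$. I would then apply inclusion--exclusion together with the hypothesis $d(u)+d(v)\geq n$ to bound the overlap from below:
\[
    |N(u)\cap N(v)| = |N(u)|+|N(v)|-|N(u)\cup N(v)| \geq d(u)+d(v)-(n-2) \geq n-(n-2)=2\:.
\]
In particular $|N(u)\cap N(v)|\geq 1$, so there exists a vertex $w\in N(u)\cap N(v)$. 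The two edges $\{u,w\}$ and $\{w,v\}$ then form a walk of length $2$ between $u$ and $v$, giving $\delta(u,v)\leq 2$ as required.

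Honestly, there is no substantive obstacle in this statement: it is an elementary pigeonhole/inclusion--exclusion observation, and the only point requiring a moment of care is correctly accounting for which of $u,v$ can belong to $N(u)$ or $N(v)$ (so that the ambient set has size exactly $n-2$ rather than $n$), since this is precisely what makes the bound $d(u)+d(v)\geq n$ force a nonempty intersection. This is exactly the Dirac-type condition $d(u)+d(v)\geq n$ appearing throughout the analysis, so the claim is essentially the statement that Ore/Dirac conditions guarantee diameter at most $2$ between the relevant pairs, which is what the subsequent construction of depth-$2$ spanning trees will rely on.
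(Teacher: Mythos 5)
Your proof is correct and follows essentially the same route as the paper: split on adjacency, then use the pigeonhole/counting bound to force a common neighbor in the non-adjacent case. The only (immaterial) difference is that you exclude both $u$ and $v$ from the ambient set and obtain $|N(u)\cap N(v)|\geq 2$, whereas the paper only needs $N(u)\cup N(v)\subseteq V\setminus\{u\}$ to conclude the intersection is nonempty.
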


\begin{proof}
Let $G=(V, E)$ be a graph, and let $v\in V$ and $u\in V$ be a pair of vertices in $G$. 
If $u\in N(v)$ then the claim follows. Otherwise, since $d(u)+ d(v) \geq n$ and $u\notin N(u)\cup N(v)$ it has to be that $N(u)\cap N(v) \neq \emptyset$. The claim follows. 
\end{proof}

\begin{corollary}\label{clm:dia}
The diameter of a Dirac's graph is at most $2$.
\end{corollary}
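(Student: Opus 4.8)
The plan is to derive this corollary as an immediate consequence of Claim~\ref{clm.clm15}. Recall that a \dirac\ graph is one in which every vertex has degree at least $n/2$; this is the only hypothesis I will need. The strategy is simply to verify that any pair of vertices in such a graph satisfies the degree-sum hypothesis of the claim, and then apply the claim.

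First I would fix an arbitrary pair of vertices $u, v \in V$. Since $G$ is a \dirac\ graph, the minimum degree condition gives $d(u) \geq n/2$ and $d(v) \geq n/2$, and hence $d(u) + d(v) \geq n/2 + n/2 = n$. This is exactly the precondition required by Claim~\ref{clm.clm15}, so the claim yields $\delta(u, v) \leq 2$.

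Finally, since $u$ and $v$ were an arbitrary pair, every pair of vertices in $G$ is at distance at most $2$. The diameter of $G$ is the maximum of $\delta(u,v)$ over all pairs, so it follows that the diameter is at most $2$, as desired.

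There is essentially no obstacle here: the corollary is a one-line specialization of Claim~\ref{clm.clm15} to the case where both degrees are individually at least $n/2$. The only thing worth stating carefully is that the degree-sum bound $d(u)+d(v) \geq n$ holds for \emph{every} pair, not just for non-adjacent pairs (adjacent pairs trivially have distance $1$ anyway), so that the distance bound applies uniformly and therefore bounds the diameter.
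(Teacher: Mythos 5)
Your proof is correct and follows exactly the paper's intended route: the corollary is stated as an immediate consequence of Claim~\ref{clm.clm15}, obtained by observing that the Dirac condition $d(u), d(v) \geq n/2$ gives $d(u)+d(v) \geq n$ for every pair of vertices. Your additional remark that the bound applies to all pairs (not just non-adjacent ones) is a reasonable point of care but changes nothing of substance.
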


\subsubsection{The Global Spanning Tree}\label{sec:globalspan}
The first tree the algorithm uses is a tree that spans the entire graph. This tree remains fixed throughout the entire execution of the algorithm. 
We first pick the leader of the graph to be the vertex whose identifier is minimal. This takes $O(D)$ rounds, where $D$ denotes the diameter of the graph (which is a constant in our case). Once the leader has been selected, a spanning tree of the graph is constructed in $O(D)$ rounds. By Corollary~\ref{clm:dia} the spanning tree has depth at most $2$. 

The global spanning tree is used implicitly for pairing the paths in Step~\ref{step:pair} (see Section~\ref{sec:pardist} for more details).

\subsubsection{Maintaining a Spanning Tree for Each Path in $\mathcal{P}_i$}

Algorithm~\ref{alg:hamalg} also implicitly uses  spanning trees of depth at most $2$ for spanning each one of the paths in the path-cover.

Specifically, each iteration $i$ starts such that each path $P\in \mathcal{P}_i$ has a spanning tree (of depth at most $2$). We denote the spanning tree corresponding to $P$ by $T(P)$. 
The root of $T(P)$ is the leader of $P$ which is defined by the following process.

Initially, the length of each path in ${\mathcal P}_0$ is constant and so the leader of each path can be selected to be the identifier of an arbitrary vertex in the path.

When path $P_1$ is merged into path $P_2$ the new leader of the new (merged) path $P'$ is taken to be the leader of $P_2$. 
By using $T(P_1)$ all vertices in $P_1$ are notified with the id of the new leader.  

The spanning-tree $T(P')$ is then constructed as follows. Let $r$ denote the leader of $P'$.
Each vertex $v$ in $P'$ which is not a neighbor of $r$ sends the id of $r$ to all its neighbors. In response, each $u\in N(v)$ sends to $v$ whether $u$ is a neighbor of $r$ or not. Then $v$ picks arbitrarily $w\in N(v) \cap N(r)$ to be its parent and notifies $w$ about it.

Consider any $P$ and any $e$ which is an edge of $T(P)$. Observe that $e$ has at least one of its endpoints in $P$ (either it is an edge incident to the root or it is an edge which is incident to the leaves of the tree). Therefore each edge can belong to at most $2$ different spanning trees (of paths of the current path-cover). 

Hence, all merges in Step~\ref{step.cyc} as well as for Step~\ref{step.merge1} can be carried out simultaneously without creating congestion.

\subsection{Finding the Initial path-cover (Step~\ref{step1})}
In Step~\ref{step1}, the algorithm computes an initial path-cover $\mathcal{P}_{0}$ such that each path consists of at least two vertices. This path-cover is constructed as follows. 
\paragraph*{First Step.}
At the first step, we find a maximal matching of the graph using the algorithm of Barenboim et al.~\cite{BEPS16}.
Let $F$ denote the set of edges of the matching of this first step.
Let $F'$ denote the set of vertices that were matched in the first step.

\paragraph*{Second Step.}
In the next step, we remove all edges in $F'\times F'$ and find again a maximal matching on the resulting graph.
Let $S$ denote the set of edges of this second matching. 
Let $S'$ denote the set of vertices that were not matched in the first step but were matched in the second step.

\begin{claim}
The subgraph induced by the edges in $F\cup S$ is a path-cover of $G$ in which each path consists of at least $2$ and at most $4$ vertices.
\end{claim}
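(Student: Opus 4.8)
The plan is to analyze the degree of each vertex in the subgraph $H \eqdef (V, F\cup S)$ and then read off the component structure. First I would record the crucial consequence of maximality: since $F$ is a maximal matching, the set $U \eqdef V\setminus F'$ of first-step-unmatched vertices is an independent set (any edge inside $U$ could be added to $F$). Hence in the second-step graph (obtained by deleting all edges inside $F'$) every surviving edge joins $F'$ to $U$, so the matching $S$ is contained in $F'\times U$, and in particular $|S|=|S'|$. From this I get the basic degree bounds in $H$: a vertex of $U$ has no $F$-edge and at most one $S$-edge, so its $H$-degree is at most $1$; a vertex of $F'$ has exactly one $F$-edge and at most one $S$-edge, so its $H$-degree is at most $2$. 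Consequently $H$ is a disjoint union of paths and cycles.

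Next I would rule out cycles and pin down the shape of the paths. A cycle would force every one of its vertices to have degree exactly $2$, hence to carry both an $F$-edge and an $S$-edge; the $F$-edge places the vertex in $F'$, but then its $S$-edge reaches a vertex of $U$, which has degree at most $1$ and cannot lie on a cycle --- a contradiction, so $H$ is a union of paths. For the length bounds I would exploit the bipartite nature of $S$: vertices of $U$ are always path endpoints (degree $\leq 1$), while interior vertices lie in $F'$. Since $F$ is a matching a path contains no two adjacent $F$-edges, and an $S$-edge always terminates at a $U$-vertex that ends the path; therefore each path contains \emph{exactly one} $F$-edge $\{a,b\}$ (with $a,b\in F'$), possibly flanked by an $S$-edge $\{a,u_1\}$ and/or $\{b,u_2\}$ with $u_1,u_2\in U$. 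This yields precisely the shapes $(a,b)$, $(u_1,a,b)$, and $(u_1,a,b,u_2)$, giving the upper bound of $4$ vertices; and because every $S$-edge is attached through its $F'$-endpoint to an $F$-edge, there are no stray length-one paths from $S$ alone.

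The main obstacle --- and the only place the Dirac hypothesis is needed --- is ruling out isolated vertices, i.e. showing every $w\in U$ is matched by $S$ so that each component has at least $2$ vertices. Here I would argue by contradiction using a counting bound. Since $U$ is independent and the second-step deletions only remove edges inside $F'$, the neighborhood of $w$ in the second-step graph equals its full neighborhood $N(w)\subseteq F'$; maximality of $S$ then forces every vertex of $N(w)$ to be $S$-matched. As each $S$-edge matches exactly one vertex of $F'$, the number of $S$-matched $F'$-vertices is exactly $|S|=|S'|$, so $d(w)\leq |S'|$, whence $|S'|\geq n/2$ by Dirac's condition. On the other hand, a single vertex $w\in U$ already has $\geq n/2$ neighbors, all in $F'$, so $|F'|\geq n/2$ and thus $|U|=n-|F'|\leq n/2$; combined with $S'\subseteq U$ this gives $|S'|\leq |U|\leq n/2$. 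The two bounds force $|S'|=|U|$, i.e. all of $U$ is $S$-matched, contradicting the assumption that $w$ is unmatched. Hence no vertex of $U$ is isolated, every component is a path on $2$, $3$, or $4$ vertices, and since $F'$-vertices already have degree $\geq 1$ the whole of $V$ is covered --- establishing that $F\cup S$ is a path-cover with the claimed sizes.
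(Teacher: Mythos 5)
Your proof is correct and takes essentially the same route as the paper's: the structural analysis (vertices of $S'$ are path endpoints, interior vertices lie in $F'$, and each $F'$-vertex has at most one $F$-neighbour and one $S$-neighbour, giving paths on at most $4$ vertices), and the coverage argument via maximality of both matchings plus Dirac's condition, which forces $|F'|\geq n/2$ and $|S'|\geq n/2$ and yields a contradiction with an uncovered vertex, are the same as in the paper. Your write-up is somewhat more careful — it explicitly rules out cycles and enumerates the possible path shapes, which the paper leaves implicit — but this is a difference of detail, not of approach.
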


\begin{proof}
We first prove that every vertex in the graph is in $F'\cup S'$ (which implies that the length of each path in the path-cover is at least $2$).
Assume towards contradiction that there exists $v\in V\setminus (F'\cup S')$.
Since $v$ was not matched in the first step we conclude that all its neighbors are in $F'$ (by the maximality of the matching). 
Thus $|F'| \geq n/2$ (since $d(v) \geq n/2$). 
Since $v$ was not matched in the second step we conclude that all its neighbors were matched again in the second step. Each one of these neighbors had to be matched in the second step to a vertex not in $F'$ (since we removed the edges in $F'\times F'$), thus, we conclude that $|S'| \geq n/2$. Thus $|F'\cup S'| \geq n$, a contradiction. 

To prove an upper bound on the length of each path we first observe that every vertex $v\in S'$ has to be an endpoint of a path (since it belongs only to a single edge in $F\cup S$). Thus the inner vertices of paths are only from $F'$. Since each vertex in $F'$ is matched only to a single vertex in $F'$ (when considering both the first and the second matching) it holds that any path can contain at most $4$ vertices, as desired.  
\end{proof}

\subsection{Pairing the  Paths (Step~\ref{step:pair})}\label{sec:pardist}
The paths pairing is performed using the global spanning tree (see Section~\ref{sec:globalspan}).
More specifically, each leader of a path sends up the tree its identifier. The vertices in the middle layer of the tree pair up all the paths from which they received a message except for at most one (depending on the parity). The leftovers (namely, unpaired paths) are sent up to the root, which pairs them and sends them back to the leaves.

\subsection{Selecting the Merge Operations  (Steps~\ref{step:pi}-\ref{step:pick})}\label{sec:maxm}
We assume that during the entire execution of the algorithm, every vertex $v$ knows to which path it belongs in the current path-cover, and for every $u\in N(v)$, $v$ knows to which path $u$ belongs and whether $u$ is an endpoint of a path.

This can be implemented by letting each vertex send updates to its neighbors after performing each one of the merging steps (i.e., Steps~\ref{step.cyc} and~\ref{step.orientmerge}).
Therefore, we may assume that every $v\in V$ knows $D_i(v)$ at Step~\ref{step:di}.

We next describe how the merging operations are selected. 
For every $P\in \mathcal{P}_i$ where $P=(u_1, \ldots, u_\ell)$ every $u_j$ for $j\in \{1,\dots, \ell-1\}$ picks u.a.r. an endpoint $v \in D_i(u_j)$ and \emph{reserves} the edge $(u_j, u_{j+1})$ for $v$.
Let $Q$ denote the path of $v$. The vertex $u_j$ sends the identity of $Q$ and $v$ to $u_{j+1}$ and if the other endpoint of $Q$, $w$ is in $D_i(u_{j+1})$, then $u_{j+1}$ notifies $w$ that $(u_i, u_{j+1})$ has been {\em reserved} for $Q$. 

Similarly, each endpoint of $P$, $v$, picks u.a.r. a vertex $w\in D_i(v)$ and sends $w$ its identity and a notification of a \emph{reservation}. 

Therefore each edge and an endpoint of a path in $\mathcal{P}_i$ is reserved for at most one path in $\mathcal{P}_i$.

Note that it might be the case that an edge $e$ is reserved for $P$ although $P$ can not be merged via $e$ to another path. However, by construction, in this case, the endpoints of $P$ are not notified about the reservation of $e$.

By Step~\ref{step:pick}, each path in $\mathcal{P}_i$ adds to $M_i$ at most one merging operation (which corresponds to one of the reservations it received).  

Specifically, for each $P\in \mathcal{P}_i$, after receiving the reservations, the endpoints of $P$ send up the tree $T(P)$ one of the reservations they received. The leader then picks one reservation and sends its details down the tree $T(P)$.
{\color{black}
\subsection{Putting Things Together}
Lemma~\ref{lemma:numofiter} states that a Hamiltonian path is found within $O(\log n)$ rounds w.h.p. We showed in Appendix~\ref{sec:imp} that our algorithm can be implemented in the \congest\ model. To get a Hamiltonian cycle from the found Hamiltonian path, we observe that the endpoints of the found Hamiltonian path, $u$, and $v$, satisfy that $d(u)+d(v) \geq n$ (this is true for both \dirac\ and \ore\ graphs). Claim~\ref{lemma:basic} implies that this computed Hamiltonian path is cycled. In turn, it can be transformed into a Hamiltonian cycle in a constant number of rounds. This observation is formalized in the following claim.}
\begin{claim}\label{claim:oneround}
    Let $P=(u, \ldots, v)$ denote a Hamiltonian path computed by Algorithm~\ref{alg:hamalg}. If $d(u)+d(v) \geq n$, then $P$ can be transformed into a Hamiltonian cycle in a constant number of rounds in the \congest\ model. 
\end{claim}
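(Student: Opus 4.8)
The plan is to first reduce the statement to the observation that $P$ is a \emph{cycled} path, and then to argue that transforming a cycled Hamiltonian path into a Hamiltonian cycle rewires only a constant number of vertices, so it can be carried out locally in $O(1)$ rounds.

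First I would establish that $P$ is cycled. Since $P$ is a Hamiltonian path we have $V(P)=V$, so every neighbor of $u$ (resp.\ of $v$) lies on $P$ and hence $d_P(u)=d(u)$ and $d_P(v)=d(v)$. The hypothesis $d(u)+d(v)\ge n$ then gives $d_P(u)+d_P(v)\ge n=|V(P)|$, so $d_P(u)+d_P(v)+1>|V(P)|$ and $P$ is \emph{not} sociable (Definition~\ref{def:socint}). By the contrapositive of Claim~\ref{lemma:basic} (every non-cycled path is sociable), $P$ must be cycled.

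Next I would unfold the definition of cycled and describe the rewiring. Write $P=(u_1,\ldots,u_\ell)$ with $u_1=u$ and $u_\ell=v$. If $\{u_1,u_\ell\}\in E$ we simply add this edge and $P$ closes into a Hamiltonian cycle. Otherwise there is a path-edge $\{u_i,u_{i+1}\}$ with $\{u_1,u_{i+1}\},\{u_i,u_\ell\}\in E$, and the cycle $u_1,u_2,\ldots,u_i,u_\ell,u_{\ell-1},\ldots,u_{i+1},u_1$ visits every vertex exactly once. The crucial point for the round complexity is that this cycle uses every edge of $P$ except $\{u_i,u_{i+1}\}$, together with the two new edges $\{u_i,u_\ell\}$ and $\{u_1,u_{i+1}\}$. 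Consequently only the four vertices $u_1,u_i,u_{i+1},u_\ell$ change their set of neighbors on the cycle, while every other vertex retains the two path-neighbors it already knows. Thus the Hamiltonian cycle is fully specified once these four vertices update the pair of cycle-neighbors they output; no relabeling or reorientation along the whole path is required.

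It remains to locate and select such an edge in $O(1)$ rounds. I would have $u_1$ and $u_\ell$ each broadcast their identifier to all of their neighbors (one round), so that every vertex learns whether it is adjacent to $u_1$ and to $u_\ell$. Each vertex adjacent to $u_\ell$ then sends a one-bit flag to its path-successor (one round; recall every vertex knows its successor on $P$). A vertex $u_{i+1}$ that is itself adjacent to $u_1$ and whose path-predecessor flagged it has detected a valid rewiring edge $\{u_i,u_{i+1}\}$. Using the depth-$2$ global spanning tree of Section~\ref{sec:globalspan} we aggregate these candidates and let the root select a single one (say, of minimum identifier) and broadcast the chosen pair of identifiers back down; this costs $O(1)$ rounds without congestion, since each internal node forwards only a single aggregated value. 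Finally $u_1,u_i,u_{i+1},u_\ell$ update their cycle-neighbors as above. The main obstacle is precisely this round bound: a naive rotation that reverses the segment $u_{i+1},\ldots,u_\ell$ to keep a globally consistent traversal direction would cost $\Omega(n)$ rounds. The argument sidesteps this by observing that a Hamiltonian cycle is determined by the (unordered) pair of cycle-neighbors reported at each vertex, so the rewiring alters the output of only a constant number of vertices and needs no propagation along the path.
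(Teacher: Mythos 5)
Your proof is correct and follows essentially the same route as the paper: the hypothesis $d(u)+d(v)\ge n$ makes the Hamiltonian path non-sociable, so Claim~\ref{lemma:basic} forces it to be cycled, and the Hamiltonian cycle is then obtained by the local rewiring along the chord pair $\{u_1,u_{i+1}\},\{u_i,u_\ell\}$ (or the edge $\{u_1,u_\ell\}$). The paper states the $O(1)$-round transformation without detail, whereas you additionally supply a concrete \congest\ implementation --- locating the rewiring edge via one-hop broadcasts and the depth-$2$ global tree, and observing that only the four vertices $u_1,u_i,u_{i+1},u_\ell$ change their reported cycle-neighbors when the output is read as unordered neighbor pairs --- which fills in exactly what the paper leaves implicit.
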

{\color{black} This concludes the proof of Theorem~\ref{thm:ham}. }
\section{Rahman-Kaykobad (\rk) Graphs}\label{sec:rk}
 A graph family that generalizes Ore graphs was introduced by Rahman and Kaykobad~\cite{RK05}, which we refer to as \emph{\rk\ graphs}. \rk\ graphs are defined as follows.
 
 \begin{definition}[\cite{RK05}]\label{def:RK}
    Let $G = (V,E)$ be a connected graph with $n$ vertices. We say that $G$ is an \rk\ graph if for all pairs of distinct nonadjacent vertices $u,v\in V$ it holds that 
        $d(u) + d(v) +\delta(u,v) \geq n + 1$. 
 \end{definition}
 
For every \rk\ graph, Rahman and Kaykobad~\cite{RK05} showed that a Hamiltonian path exists. 
 \begin{theorem}[Thm.~1.6~\cite{RK05}, restated]
    Let $G = (V,E)$ be an \rk\ graph. Then $G$ has a Hamiltonian path.
 \end{theorem}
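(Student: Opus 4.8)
The plan is to run the standard extremal ``longest path'' argument, tailored so that the distance term $\delta(v_1,v_p)$ in the Rahman--Kaykobad inequality actually gets used. Let $P=(v_1,\ldots,v_p)$ be a longest path in $G$ and suppose, towards a contradiction, that $p<n$, i.e.\ that $P$ is not Hamiltonian. Because $P$ is longest, neither $v_1$ nor $v_p$ has a neighbour off $P$, so $N(v_1),N(v_p)\subseteq V(P)$. First I would rule out $\{v_1,v_p\}\in E$: if the two endpoints were adjacent, then $P$ together with this edge would be a cycle $C$ on $p<n$ vertices, and since $G$ is connected (by Definition~\ref{def:RK}) some vertex outside $C$ is adjacent to a vertex of $C$, yielding a path on $p+1$ vertices and contradicting the maximality of $P$. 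Hence $v_1$ and $v_p$ are nonadjacent, and the defining inequality $d(v_1)+d(v_p)+\delta(v_1,v_p)\geq n+1$ applies to them.

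Next I would extract a degree bound from a rotation/crossing argument. If there were an index $i\in\{1,\ldots,p-1\}$ with $\{v_1,v_{i+1}\}\in E$ and $\{v_p,v_i\}\in E$ simultaneously, then rerouting along these two chords turns $P$ into a cycle through all of $V(P)$; exactly as in the previous case, connectivity together with $p<n$ then produces a path longer than $P$, a contradiction. Therefore the position sets $\{i:\{v_1,v_{i+1}\}\in E\}$ and $\{i:\{v_p,v_i\}\in E\}$ are disjoint subsets of $\{1,\ldots,p-1\}$, which gives $d(v_1)+d(v_p)\leq p-1$.

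The remaining and main task is to control the distance so as to contradict the hypothesis; concretely, I aim to prove $d(v_1)+d(v_p)+\delta(v_1,v_p)\leq p+1$, which is at most $n$ since $p<n$ and thus contradicts the inequality $\geq n+1$ from the first paragraph. The easy sub-case is when $v_1$ and $v_p$ have a common neighbour $v_j$ on $P$: then $\delta(v_1,v_p)=2$ (it cannot be $1$, since they are nonadjacent), and the degree bound $d(v_1)+d(v_p)\leq p-1$ already yields $d(v_1)+d(v_p)+\delta(v_1,v_p)\leq p+1$. The hard part, which I expect to be the genuine obstacle, is the general case: I would construct a short $v_1$--$v_p$ path inside $P$ by using one chord incident to $v_1$ and one chord incident to $v_p$ as shortcuts and traversing the remaining portion of $P$ directly, then argue, using the disjointness of the two position sets established above, that such a path visits at most $p-d(v_1)-d(v_p)+2$ vertices, i.e.\ that $\delta(v_1,v_p)\leq p-d(v_1)-d(v_p)+1$. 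Combining this with the degree bound gives precisely $d(v_1)+d(v_p)+\delta(v_1,v_p)\leq p+1$, completing the contradiction and forcing $p=n$. The delicate point in this last step is choosing the two shortcut chords so that the resulting walk is a \emph{simple} path (no reused vertices) while still attaining the claimed length bound; this is exactly where the disjointness coming from the rotation argument, and the connectivity of $G$, must do the real work.
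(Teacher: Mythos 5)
The paper does not actually prove this statement: it is imported from Rahman and Kaykobad~\cite{RK05} purely by citation, so there is no in-paper proof to compare yours against. Your longest-path strategy is the natural one, and most of it is sound: ruling out $\{v_1,v_p\}\in E$, the crossing (rotation) argument giving $d(v_1)+d(v_p)\leq p-1$, and the common-neighbour subcase are all correct, and the inequality you target, $d(v_1)+d(v_p)+\delta(v_1,v_p)\leq p+1\leq n$, is exactly what yields the contradiction.

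The gap is the step you yourself flag as the ``delicate point'': you assert $\delta(v_1,v_p)\leq p-d(v_1)-d(v_p)+1$ in the no-common-neighbour case but never prove it, and the ingredients you say will do ``the real work'' --- the disjointness coming from the rotation argument, and the connectivity of $G$ --- are not what closes it. What closes it is the case hypothesis itself plus a counting argument. Let $A=\{i : v_i\in N(v_1)\}$ and $B=\{i : v_i\in N(v_p)\}$; both are subsets of $\{2,\ldots,p-1\}$ (the endpoints are nonadjacent and have no neighbours off $P$), and in this case $A\cap B=\emptyset$, so $|A\cup B|=d(v_1)+d(v_p)$. Since $2\in A$ and $p-1\in B$, the sorted list of $A\cup B$ contains a consecutive pair $x<y$ with $x\in A$ and $y\in B$. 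Every index strictly between $x$ and $y$ lies outside $A\cup B$, and there are only $(p-2)-\bigl(d(v_1)+d(v_p)\bigr)$ such indices in all of $\{2,\ldots,p-1\}$, hence $y-x\leq p-1-d(v_1)-d(v_p)$. The walk $v_1,v_x,v_{x+1},\ldots,v_y,v_p$ is automatically a simple path (its interior is a subpath of $P$ with $2\leq x<y\leq p-1$), so $\delta(v_1,v_p)\leq y-x+2\leq p+1-d(v_1)-d(v_p)$, giving $d(v_1)+d(v_p)+\delta(v_1,v_p)\leq p+1\leq n<n+1$, the desired contradiction. Note that neither connectivity nor the rotation disjointness enters this step; they are needed only where you already used them (extending a spanning cycle of $V(P)$, and bounding the degree sum). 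With this counting argument inserted, your proof is complete and correct.
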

 
 \sloppy In this section, we design and analyze a distributed  \congest\ algorithm that finds a Hamiltonian path in \rk\  graphs. 
 We start by showing several structural properties of \rk\  graphs and then describe how to adapt Algorithm~\ref{alg:hamalg} so that it finds a Hamiltonian path also in \rk\ graphs. {\color{black}We note that similarly to \dirac\ graphs, in case the output of the algorithm for a given instance satisfies the condition of Claim~\ref{claim:oneround} then the resulting path can be turned into a Hamiltonian cycle. This is always the case for \ore\ graphs, as formalized in the following theorem.
 \mainhamtwo*
 }
 
\subsection{Structural Properties of \rk\ graphs}
Throughout this section we assume w.l.o.g. that there is a vertex $v^* \in V$ such that $d(v^*) < \frac{n}{2}$ (otherwise, this is a \dirac\ graph). 

\paragraph*{Notation. }
Let $G = (V,E)$ be an \rk\ graph. 
Let $v^* \in V$ be such that $d(v^*) < \frac{n}{2}$.
We define $A \triangleq \{u \in V \mid \delta(u,v^*)=1\}$, $B \triangleq \{u \in V \mid \delta(u,v^*)=2\}$, $C \triangleq \{u \in V \mid \delta(u,v^*)=3\}$, $D \triangleq \{u \in V \mid \delta(u,v^*)=4\}$, and  $E \triangleq \{u \in V \mid \delta(u,v^*)=5\}$. Let $H \triangleq \{u \in V \mid d(u) \geq \frac{n}{2}\}$. We refer to the vertices in $H$ as ``heavy''. Let $\hat{A} \triangleq A \setminus H$, $\hat{B} \triangleq B \setminus H$, and $\hat{C} \triangleq C \setminus H$. Let $\hat{A}_+ \triangleq \hat{A}\cup\{v^*\}$. 

\begin{claim}
    The set $\hat{B} \subset V$ is empty, i.e., all vertices in $B$ are in $H$.
\end{claim}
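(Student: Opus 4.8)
The plan is to read off the required degree bound directly from the Rahman--Kaykobad condition, applied to the single pair $(u,v^*)$. Concretely, I would fix an arbitrary $u\in B$ and show that $u\in H$; since $u$ is arbitrary this gives $B\subseteq H$, and hence $\hat B=B\setminus H=\emptyset$. The only structural input I need is the defining property of $B$, namely $\delta(u,v^*)=2$. In particular $\delta(u,v^*)=2>0$ forces $u\neq v^*$, and $\delta(u,v^*)\neq 1$ forces $u\notin N(v^*)$, so $u$ and $v^*$ are a pair of distinct \emph{nonadjacent} vertices and Definition~\ref{def:RK} is applicable to them.

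First I would instantiate the RK inequality for this pair: $d(u)+d(v^*)+\delta(u,v^*)\ge n+1$. Substituting $\delta(u,v^*)=2$ and rearranging gives $d(u)\ge n-1-d(v^*)$. Next I would feed in the standing assumption of this subsection, $d(v^*)<n/2$, to obtain $d(u)> n-1-n/2 = n/2-1$. Since $d(u)$ is an integer, this strict bound upgrades to $d(u)\ge n/2$, i.e. $u\in H$. Thus every vertex of $B$ is heavy, so $\hat B=\emptyset$, as claimed.

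The whole argument hinges on the observation that distance-$2$ vertices are automatically nonadjacent to $v^*$, which is precisely what licenses one clean application of the RK condition; everything after that is a one-line rearrangement, so I expect no genuine difficulty from the combinatorics. The one step that deserves care is the final integrality passage $d(u)>n/2-1\Rightarrow d(u)\ge n/2$: this is immediate when $n$ is even, since then $n/2-1$ is itself an integer and the strict inequality forces the jump, and more generally it relies on the fact that $v^*$ is a low-degree vertex so that the slack $n-1-d(v^*)$ reaches the heaviness threshold $n/2$. I would flag this rounding step as the delicate point of the proof, noting that the claim is otherwise a purely local degree statement that does not interact with the finer partition of $V$ into $A,C,D,E$ introduced elsewhere in the section.
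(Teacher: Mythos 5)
Your proof is correct and follows essentially the same route as the paper's own proof: instantiate the RK condition on the pair $(u,v^*)$, whose nonadjacency is forced by $\delta(u,v^*)=2$, to get $d(u)\geq n-1-d(v^*)$, then feed in $d(v^*)<n/2$ and pass to $d(u)\geq n/2$ by integrality. The rounding step you flag as delicate is glossed over in the paper as well (it writes $d(u)>n/2-1$, ``i.e.,'' $d(u)\geq n/2$ without comment), so your treatment is, if anything, slightly more careful than the original.
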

\begin{proof}
    Let $u \in B$. By Definition~\ref{def:RK} and the definition of the set $B$ it is implied that $d(v^*)+d(u) \geq n-1$. Since $d(v^*) < \frac{n}{2}$, it follows that $d(u) \geq n-1 - d(v^*) > n - 1 - \frac{n}{2} = \frac{n}{2}-1$, i.e., $d(u) \geq \frac{n}{2}$, as required. 
\end{proof}

In what comes next, we prove that if for a given input \rk\ graph it holds that $D \neq \emptyset$, then one can compute a Hamiltonian Path in $\Theta(1)$ rounds. 
\begin{claim}\label{claim:E}
    The set $E$ is empty.
\end{claim}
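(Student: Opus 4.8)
The plan is to argue by contradiction: assuming $E \neq \emptyset$ produces a vertex at distance exactly $3$ from $v^*$ whose neighborhood, together with $N(v^*)$, is forced by the \rk\ inequality to exhaust all of $V$, and this is incompatible with the existence of a vertex at distance $5$.

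First I would fix a vertex $w \in E$, so that $\delta(v^*, w) = 5$, and take a shortest $v^*$--$w$ path $(v^*, x_1, x_2, x_3, x_4, w)$. The vertex $c = x_3$ then satisfies $\delta(v^*, c) = 3$; in particular $c \in C$ and $c$ is non-adjacent to $v^*$. Applying Definition~\ref{def:RK} to the non-adjacent pair $\{v^*, c\}$ gives $d(v^*) + d(c) + 3 \geq n + 1$, i.e. $d(v^*) + d(c) \geq n - 2$.

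Next I would establish the matching upper bound via a disjointness count. Since $\delta(v^*, c) = 3$, the four sets $\{v^*\}$, $\{c\}$, $N(v^*)$, $N(c)$ are pairwise disjoint: $v^* \neq c$, neither lies in the other's neighborhood (as they are non-adjacent), $v^* \notin N(v^*)$ and $c \notin N(c)$ since $G$ is simple, and $N(v^*) \cap N(c) = \emptyset$ because a common neighbor would yield a $v^*$--$c$ path of length $2$, contradicting $\delta(v^*, c) = 3$. Hence $n \geq |V| \geq 2 + d(v^*) + d(c)$, so $d(v^*) + d(c) \leq n - 2$. Combining this with the \rk\ lower bound forces $d(v^*) + d(c) = n - 2$ and therefore $V = \{v^*, c\} \cup N(v^*) \cup N(c)$.

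Finally I would read off the contradiction from this identity. Every vertex of $N(v^*)$ is at distance $1$ from $v^*$, every vertex of $N(c)$ is at distance at most $4$ (being adjacent to the distance-$3$ vertex $c$), and $v^*, c$ are at distances $0$ and $3$. Thus every vertex of $V$ lies within distance $4$ of $v^*$, contradicting $\delta(v^*, w) = 5$. I expect the only delicate point to be the disjointness verification, and specifically the exclusion of common neighbors of $v^*$ and $c$, which is exactly where $\delta(v^*, c) = 3$ (rather than $\leq 2$) is used; everything else is an immediate consequence of the \rk\ inequality and the tightness it produces.
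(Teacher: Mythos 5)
Your proof is correct and is essentially the paper's argument: both apply the \rk\ inequality to $v^*$ and a vertex $c$ at distance exactly $3$, use the disjointness of $\{v^*\}$, $\{c\}$, $N(v^*)$, $N(c)$ to force these sets to exhaust $V$, and conclude that every vertex lies within distance $4$ of $v^*$. The only difference is presentational — you wrap it as a contradiction starting from a hypothetical $w \in E$ (which, as a side benefit, guarantees the distance-$3$ vertex exists), whereas the paper argues directly from an arbitrary $u \in C$.
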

\begin{proof}
    Let $u \in C$. By Definition~\ref{def:RK} and the definition of the set $C$ it is implied that $d(v^*)+d(u) \geq n+1-3= n-2$. 
    Since $N(v^*) \cap N(u) = \emptyset$, $v^*  \not\in N(v^*)$, $u \not\in \cup N(u)$, and since $N(v^*) = A$, it follows that $N(u) = B \cupdot (C\setminus\{u\}) \cupdot D$. In turn, this means that there are no vertices of distance more than $4$ from $v^*$, as required. 
\end{proof}

\begin{claim}\label{claim:CD}
    If $D \neq \emptyset$, then 
    \begin{itemize}
        \item $|B|=1$,
        \item $D = k_{|D|}$, and
        \item The subgraph induced by $C$ and $D$ is a $k_{|C|,|D|}$. 
    \end{itemize}
\end{claim}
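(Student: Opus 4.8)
The plan is to combine the breadth-first-search layer structure around $v^*$ with the \rk\ inequality applied to the pair $(v^*, w)$ for an arbitrary $w \in D$. First I would record the global picture: since $E$ is empty (Claim~\ref{claim:E}) and the graph is connected, no vertex is at distance more than $4$ from $v^*$, so $V = \{v^*\} \cup A \cup B \cup C \cup D$ and therefore $n = 1 + |A| + |B| + |C| + |D|$, while $d(v^*) = |A|$. The structural input I need is the standard BFS adjacency constraint: a vertex at distance $i$ from $v^*$ has neighbors only at distances $i-1$, $i$, and $i+1$. Applied to $w \in D$ (which sits at distance $4$) and using $E = \emptyset$, this gives $N(w) \subseteq C \cup (D \setminus \{w\})$, hence the upper bound $d(w) \leq |C| + |D| - 1$.

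Next I would invoke the \rk\ condition on the nonadjacent pair $(v^*, w)$ --- they are nonadjacent because $\delta(v^*, w) = 4 > 1$ --- to obtain $d(v^*) + d(w) + 4 \geq n + 1$. Substituting $d(v^*) = |A|$ and the expression for $n$ and simplifying yields the matching lower bound $d(w) \geq |B| + |C| + |D| - 2$. Comparing the two bounds forces $|B| \leq 1$. Since $D \neq \emptyset$ there is a vertex at distance $4$, and BFS layers cannot be skipped, so $A$, $B$, $C$ are all nonempty; in particular $|B| \geq 1$, and thus $|B| = 1$, which is the first bullet.

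Plugging $|B| = 1$ back in, the lower bound becomes $d(w) \geq |C| + |D| - 1$, which together with the upper bound forces the equality $d(w) = |C| + |D| - 1$ and hence $N(w) = C \cup (D \setminus \{w\})$ exactly. Because this equality holds for \emph{every} $w \in D$, each vertex of $D$ is adjacent to all other vertices of $D$ (so $D$ induces the clique $k_{|D|}$, the second bullet) and to all of $C$ (so the bipartite graph between $C$ and $D$ is the complete bipartite graph $k_{|C|,|D|}$, the third bullet).

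I do not expect a genuine obstacle: the argument is a tight double counting squeezed between the two degree bounds. The only points requiring care are (i)~justifying $|B| \geq 1$ rather than $|B| = 0$, which needs the observation that $D \neq \emptyset$ forces all earlier layers to be nonempty, and (ii)~invoking the adjacency constraint in the correct direction, namely that a distance-$4$ vertex has no neighbor in $B$ (distance $2$), so that the upper bound $d(w) \leq |C| + |D| - 1$ is genuinely the one that gets matched.
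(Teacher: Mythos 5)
Your proposal is correct and follows essentially the same route as the paper: apply the \rk\ inequality to the nonadjacent pair $(v^*,w)$ for $w\in D$, combine it with the BFS-layer constraint $N(w)\subseteq C\cup(D\setminus\{w\})$ (using $E=\emptyset$), and squeeze to force $|B|=1$ and $N(w)=C\cup(D\setminus\{w\})$ for every $w\in D$, which yields all three bullets. If anything, your write-up is more explicit than the paper's (which phrases the squeeze as ``at most one vertex is uncovered'' and leaves $|B|\geq 1$ and bullets two and three implicit), but the underlying argument is identical.
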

\begin{proof}
    Let $u \in D$. By Definition~\ref{def:RK} and the definition of the set $D$ it is implied that $d(v^*)+d(u) \geq n+1-4= n-3$. 
    Since $N(v^*) \cap N(u) = \emptyset$, $v^*  \not\in N(v^*)$, $u \not\in \cup N(u)$, $N(v^*) = A$, and due to Claim~\ref{claim:E} (i.e., $E = \emptyset$), it follows that $N(u) = C \cupdot (D\setminus\{u\})$. In turn, this means that a single vertex is not in $N(v^*) \cupdot N(D)$, that is, $|B|=1$, as required. 
\end{proof}

\begin{claim}
    If $D \neq \emptyset$, then 
    \begin{itemize}
        \item $A = k_{|A|}$, and
        \item The subgraph induced by $A$ and $B$ is a $k_{|A|,1}$. 
    \end{itemize}
\end{claim}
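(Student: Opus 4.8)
The plan is to follow the template used in the proof of Claim~\ref{claim:CD}: apply the \rk\ inequality to a few carefully chosen non-adjacent pairs, and control each relevant degree by noting that a neighbor of a vertex at distance $k$ from $v^*$ must lie in the layers at distance $k-1,k,k+1$. Throughout I write $b_0$ for the unique vertex of $B$ (using $|B|=1$ from Claim~\ref{claim:CD}) and fix some $u\in D$, which exists by hypothesis; by Claim~\ref{claim:CD} we have $N(u)=C\cup(D\setminus\{u\})$, so $d(u)=|C|+|D|-1$, and the layer decomposition gives $n=2+|A|+|C|+|D|$.

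First I would record two structural facts about $b_0$. Since $b_0$ is at distance $2$ from $v^*$, its neighbors lie at distance $1,2,3$, i.e.\ $N(b_0)\subseteq A\cup C$, whence $d(b_0)\le |A|+|C|$; in particular $b_0\not\sim v^*$ and $b_0\not\sim u$. Conversely every $c\in C$ is at distance $3$ and so must have a neighbor at distance $2$, and $b_0$ is the \emph{only} such vertex, so $C\subseteq N(b_0)$. Writing $A_0\eqdef N(b_0)\cap A$, this gives $d(b_0)=|A_0|+|C|$, and the second bullet (that $A$--$B$ induces $k_{|A|,1}$) is exactly the assertion $A_0=A$, equivalently $d(b_0)=|A|+|C|$. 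The first bullet would then follow quickly: once $A_0=A$, every $a\in A$ is at distance $3$ from $u$ (through $a-b_0-c-u$ for any $c\in C$) and non-adjacent to $u$, so \rk\ gives $d(a)\ge n+1-3-d(u)=|A|+1$; since $N(a)\subseteq\{v^*\}\cup(A\setminus\{a\})\cup\{b_0\}$ we also have $d(a)\le |A|+1$, forcing $d(a)=|A|+1$ and hence $a$ adjacent to all of $A\setminus\{a\}$, so $A=k_{|A|}$.

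The substance of the argument is therefore the step $d(b_0)=|A|+|C|$, and this is where I expect the main obstacle. The natural inputs are the two non-adjacent pairs touching $b_0$: for $(v^*,b_0)$ we have $\delta=2$, yielding $d(b_0)\ge n-1-d(v^*)=1+|C|+|D|$, and for $(b_0,u)$ we again have $\delta=2$ (via any $c\in C$), yielding $d(b_0)\ge n-1-d(u)=|A|+2$. Together these give only $d(b_0)\ge\max\{1+|C|+|D|,\,|A|+2\}$, which meets the upper bound $|A|+|C|$ precisely when $|A|\le |D|+1$ or $|C|\le 2$. Closing the residual gap is the delicate point: the pairwise \rk\ constraints alone do not obviously force $A_0=A$ in the regime $|A|\ge|D|+2$ and $|C|\ge 3$.

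Accordingly, I would invest essentially all of the effort in that single step, and I would look to strengthen the degree lower bound beyond what the two pairs above provide. The two candidate levers are (i) the heaviness of $b_0$ — recall $\hat{B}=\emptyset$, so $d(b_0)\ge n/2$ — combined with an a priori size relation among $|A|,|C|,|D|$ valid in the $D\neq\emptyset$ regime, and (ii) a global counting argument over the layer decomposition rather than a pair-by-pair one. I would first try to establish such a size relation (e.g.\ an upper bound on $|A|$ in terms of $|D|$, or on $|C|$) directly from the \rk\ condition, since that would immediately collapse the maximum above onto $|A|+|C|$; everything else — the confinement of neighborhoods to adjacent layers, the clique structure of $A$, and the reduction of the first bullet to the second — is routine once $A_0=A$ is secured.
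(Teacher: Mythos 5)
Your reduction (second bullet $\Leftrightarrow$ $A_0=A$ $\Leftrightarrow$ $d(b_0)=|A|+|C|$, where $b_0$ is the unique vertex of $B$, and first bullet from the pairs $(a,u)$, $u\in D$, once that is known) is sound, and the obstacle you flag is not an artifact of your method: the pairwise constraints cannot force $A_0=A$ because the second bullet of the claim is in fact \emph{false}, exactly in the regime $|A|\ge|D|+2$, $|C|\ge 3$ that you isolated. Take $n=11$ vertices: $v^*$; $A=\{a_1,a_2,a_3,a_4\}$ a clique, each $a_i$ adjacent to $v^*$; $B=\{b_0\}$ with $b_0$ adjacent in $A$ only to $a_1,a_2$; $C=\{c_1,c_2,c_3,c_4\}$ a clique, each $c_j$ adjacent to $b_0$; $D=\{d_1\}$ with $d_1$ adjacent to all of $C$. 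Then $d(v^*)=4<n/2$, the BFS layers of $v^*$ are exactly $A,B,C,D$ (so $D\neq\emptyset$, $E=\emptyset$), and every nonadjacent pair satisfies Definition~\ref{def:RK} with equality: $(v^*,b_0)\colon 4+6+2=12$; $(v^*,c_j)\colon 4+5+3=12$; $(v^*,d_1)\colon 4+4+4=12$; $(a_1,c_j)\colon 5+5+2=12$; $(a_3,b_0)\colon 4+6+2=12$; $(a_3,c_j)\colon 4+5+3=12$; $(a_1,d_1)\colon 5+4+3=12$; $(a_3,d_1)\colon 4+4+4=12$; $(b_0,d_1)\colon 6+4+2=12$. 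This is an \rk\ graph with $D\neq\emptyset$ in which $b_0\not\sim a_3,a_4$, so the $A$--$B$ subgraph is not $k_{|A|,1}$. Note also that $d(b_0)=6\ge n/2$ and yet $d(b_0)<|A|+|C|=8$, so neither of your two proposed levers (heaviness of $b_0$, or a global counting refinement) can close the gap: there is nothing to close.

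The comparison with the paper is instructive, because the paper's own proof commits precisely the error you were careful to avoid. It applies the \rk\ inequality to $u\in A$, $v\in D$ with the value $\delta(u,v)=3$ (writing $d(u)+d(v)\ge n+1-3$), but the layer structure only guarantees $\delta(u,v)\ge 3$, and equality holds iff $u\sim b_0$ --- any length-$3$ path must have the form $u-b_0-c-v$ --- which is essentially the conclusion being proven for $u$. If $u\not\sim b_0$ one only gets $\delta(u,v)\ge 4$ and the weaker bound $d(u)+d(v)\ge n-3$, which the counterexample above meets with equality. Two salvage remarks. First, the first bullet ($A=k_{|A|}$) is true and provable along your lines: the pair $(v^*,b_0)$ gives $|A_0|\ge|D|+1\ge 2$; a vertex $a\in A$ with no neighbor in $A_0\cup\{b_0\}$ would lie at distance exactly $4$ from each $c\in C$ (every $a$--$c$ path passes through $b_0$), and the pair $(a,c)$ would force $d(a)\ge|A|-1>|A|-|A_0|\ge d(a)$, a contradiction; hence $\delta(a,c)\le 3$ for every $a\in A$, $c\in C$, and the pairs $(a,c)$ then push $d(a)$ to its maximum possible value, making $A\cup\{v^*\}$ a clique. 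Second, the paper's downstream use of this claim (stitching clique paths when the BFS from $v^*$ reaches distance $4$) needs only \emph{some} edge between $A$ and $b_0$, which is guaranteed by $|A_0|\ge|D|+1$; so the constant-round algorithm for the $D\neq\emptyset$ case survives, but the claim as stated, and the paper's proof of it, do not. Your analysis, including the identification of the exact regime where the pairwise constraints fall short, is the correct one.
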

\begin{proof} 
    Let $u \in A$ and let $v \in D$. By Definition~\ref{def:RK} and the definition of the sets $A$ and $D$, it is implied that $d(u)+d(v) \geq n+1-3= n-2$. 
    Since $N(u) \cap N(v) = \emptyset$, $u  \not\in N(u)$, $v \not\in \cup N(v)$, $N(v^*) = A$, and due to Claim~\ref{claim:CD} (i.e., The subgraph induced by $C$ and $D$ is a $k_{|C|,|D|}$.), it follows that $N(u) = \{v^*\} \cupdot (A\setminus\{u\}) \cupdot B$. This means every vertex in $A$ is adjacent to all the other vertices in $A$, $B$, and $v^*$, as required. 
\end{proof}

Due to the above claims, there is a $\Theta(1)$ rounds \congest\ algorithm that decides whether the input graph satisfies that $D \neq \emptyset$: it simply performs a BFS from $v^*$. If this BFS finds vertices of distance four from $v^*$, then the algorithm computes a Hamiltonian path for each clique separately and then ``stitches'' these paths to a single one. Hence, this simple case can be entirely dealt with a $\Theta(1)$ rounds \congest\ algorithm. From this point until the end of this section, we assume that the input graph satisfies that $D = \emptyset$.

In the following claim, we argue that the subgraph induced on $C$ is a clique, and the subgraph induced on $B \cup C$ contains a fully bipartite graph between $B$ and $C$. 
\begin{claim}\label{claim:biclique}
 For each $c \in C$ it holds that $N(c) = (B \cup C) \setminus \{c\} $.
\end{claim}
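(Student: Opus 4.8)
The plan is to prove the two inclusions $N(c) \subseteq (B \cup C) \setminus \{c\}$ and $N(c) \supseteq (B \cup C) \setminus \{c\}$, the first by the distance layering and the second by a counting argument driven by the \rk\ degree condition applied to the non-adjacent pair $(v^*, c)$. First I would record the easy inclusion: since $c \in C$ means $\delta(c, v^*) = 3$, the vertex $c$ is adjacent neither to $v^*$ (that would force $\delta(c,v^*)=1$) nor to any vertex of $A = N(v^*)$ (that would force $\delta(c,v^*) \le 2$). Because we are in the standing case $D = \emptyset$ and because $E = \emptyset$ by Claim~\ref{claim:E}, the only vertices left for $c$ to be adjacent to are those in $(B \cup C) \setminus \{c\}$. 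Hence $N(c) \subseteq (B \cup C) \setminus \{c\}$, and in particular $d(c) \le |B| + |C| - 1$.

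Next I would extract the matching lower bound on $d(c)$ from Definition~\ref{def:RK}. Since $v^*$ and $c$ are non-adjacent, the \rk\ condition gives $d(v^*) + d(c) + \delta(v^*, c) \ge n+1$, i.e. $d(v^*) + d(c) \ge n - 2$. Now I use the exact vertex partition $V = \{v^*\} \cupdot A \cupdot B \cupdot C$, which is valid precisely because $D = \emptyset$ (standing assumption) and $E = \emptyset$ (Claim~\ref{claim:E}); this yields $|A| + |B| + |C| = n - 1$. Combining this with $d(v^*) = |N(v^*)| = |A|$ gives $d(c) \ge n - 2 - |A| = |B| + |C| - 1$.

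Finally I would combine the two bounds: they force $d(c) = |B| + |C| - 1 = |(B \cup C) \setminus \{c\}|$. Since $N(c)$ is a subset of $(B \cup C) \setminus \{c\}$ of the same cardinality as the latter, the inclusion must be an equality, giving $N(c) = (B \cup C) \setminus \{c\}$ as claimed. I do not expect a genuine obstacle here; the argument is a clean squeeze between an upper bound from the distance layering and a lower bound from the degree condition. The only point requiring care is that the vertex partition be exact with no stray layers, which is exactly why the proof leans on the standing hypothesis $D = \emptyset$ together with $E = \emptyset$ from Claim~\ref{claim:E}; I would make that dependence explicit so the counting $|A|+|B|+|C| = n-1$ is beyond doubt.
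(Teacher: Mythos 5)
Your proof is correct and follows essentially the same route as the paper: both derive the lower bound $d(c) \geq n-2-|A|$ from the \rk{} condition applied to the non-adjacent pair $(v^*,c)$, obtain the matching upper bound by excluding $c$, $v^*$, and $A$ from the possible neighbors, and conclude by a squeeze that $N(c)$ is exactly $(B \cup C)\setminus\{c\}$. Your version is slightly more explicit about the dependence on the standing assumption $D=\emptyset$ and on Claim~\ref{claim:E} (which the paper uses implicitly to identify $V \setminus (\{v^*,c\}\cup A)$ with $(B\cup C)\setminus\{c\}$), but this is a presentational difference, not a different argument.
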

\begin{proof}
     By Definition~\ref{def:RK} and the definition of the sets $A$ and $C$ it is implied that $d(v^*)+d(c) \geq n+1-3=n-2$. Since $|A| = d(v^*)$, it follows that $d(c) \geq n-2 - |A|$. On the other hand, the vertex $c$ is not a neighbor of itself, nor of $v^*$, hence the degree of $c$ can be bounded from above, as follows
     $d(c) \leq n - 1 - 1 - |A| = n-2 - |A|$. Hence, $d(c) = n-2 - |A|$. Put differently, the neighbors of each vertex $c \in C$ are all of the rest of the vertices in $V$ which are not in $A$, not $v^*$, and not $c$ itself, i.e., the neighbors of $c$ are all the vertices in $(B \cup C) \setminus \{c\}$, as required. 
\end{proof}

A corollary of Claim~\ref{claim:biclique} is that the vertex set $V$ of an \rk\ graph $G=(V,E)$ can be partitioned to the corresponding $A,B$, and $C$ sets. 
\begin{corollary}
     $V=\{v^*\}\cupdot A\cupdot B \cupdot C$. 
\end{corollary}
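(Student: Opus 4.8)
The plan is to recognize $\{v^*\}, A, B, C, D, E$ as the breadth-first layers rooted at $v^*$ and to close off this layering using the emptiness facts already established. Since the target is a disjoint-union decomposition, I would split the work into disjointness and coverage.

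For disjointness I would observe that each of the six sets fixes $\delta(\cdot, v^*)$ to a distinct value in $\{0,1,2,3,4,5\}$, and since a vertex has a single well-defined distance to $v^*$, no vertex can lie in two of them. In particular $\{v^*\}, A, B, C$ are pairwise disjoint, so it remains only to show that these four sets exhaust $V$.

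For coverage I would use the connectivity of $G$ (part of Definition~\ref{def:RK}): every $w \in V$ has a finite distance to $v^*$ and hence lies in the layer indexed by $\delta(w, v^*)$, so it suffices to rule out all layers of index $\geq 4$. The standing assumption of this subsection gives $D = \emptyset$ (distance $4$), and Claim~\ref{claim:E} gives $E = \emptyset$ (distance $5$). The one point deserving care is that I need \emph{every} distance $\geq 4$ excluded, not merely the values $4$ and $5$. For this I would appeal to the mechanism inside the proof of Claim~\ref{claim:E}: there the neighborhood of a fixed $u \in C$ is forced to be all of $V \setminus (\{v^*\}\cupdot A \cupdot \{u\})$, so every vertex other than $v^*$ and the $A$-vertices is adjacent to $u$ and therefore at distance at most $4$ from $v^*$; consequently no vertex sits at distance exceeding $4$ (and if $C = \emptyset$ the conclusion is immediate, since then no vertex reaches distance $3$ and hence none reaches distance $4$ either).

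Combining the two parts, every $w \in V$ satisfies $\delta(w, v^*) \in \{0,1,2,3\}$, i.e.\ $w \in \{v^*\}\cupdot A \cupdot B \cupdot C$, which together with the disjointness from the first step yields the claimed partition. I do not expect a genuine obstacle here: the result is a bookkeeping consequence of the BFS layering combined with the previously proved emptiness of the outer layers. The only subtle point is the one flagged above, namely ensuring the layering is closed for \emph{all} indices $\geq 4$ rather than just the two indices named in the claims — which is precisely what the argument proving Claim~\ref{claim:E} already delivers.
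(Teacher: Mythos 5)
Your proof is correct, but it takes a genuinely different route from the paper's. The paper obtains the partition as a consequence of Claim~\ref{claim:biclique}: the degree count in that claim's proof forces every $c \in C$ to satisfy $N(c) = V \setminus \bigl(\{v^*\} \cup A \cup \{c\}\bigr)$, so $V = \{v^*\}\cupdot A\cupdot\{c\}\cupdot N(c) = \{v^*\}\cupdot A \cupdot B \cupdot C$ by disjointness of the distance layers. You instead argue via the BFS layering itself: connectivity places every vertex in some layer, and you rule out layers of index $\geq 4$ using the standing assumption $D=\emptyset$, Claim~\ref{claim:E}, and (for indices $\geq 6$) the mechanism inside the proof of Claim~\ref{claim:E}. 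Both arguments are sound, and yours has the merit of explicitly treating the $C=\emptyset$ case, which the paper's derivation silently needs (Claim~\ref{claim:biclique} is vacuous when $C=\emptyset$). However, your final step is more elaborate than necessary: since every prefix of a shortest path is itself a shortest path, the distance layers are gap-free, so any vertex at distance $\geq 4$ from $v^*$ would already force some vertex at distance exactly $4$, contradicting $D=\emptyset$ directly. With that observation your argument needs only connectivity and $D=\emptyset$ --- neither Claim~\ref{claim:E} nor its internals, nor the $C=\emptyset$ case split --- which makes the layering route strictly more elementary than the paper's degree-counting one.
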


\begin{claim}\label{claim:Aclique}
    The subgraph induced on $\hat{A}_+$ is a clique. 
\end{claim}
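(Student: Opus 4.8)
The plan is to verify the two kinds of edges in $\hat{A}_+$ separately. The edges incident to $v^*$ come for free: by definition $\hat{A}\subseteq A=N(v^*)$, so every vertex of $\hat{A}$ is a neighbour of $v^*$. Hence the claim reduces entirely to showing that any two \emph{distinct} vertices $u,w\in\hat{A}$ are adjacent; once this is done, $\hat{A}_+=\hat{A}\cup\{v^*\}$ is a clique.

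For the pairwise adjacency I would argue by contradiction. Suppose $u,w\in\hat{A}$, $u\neq w$, and $\{u,w\}\notin E$. Since $u,w\in A=N(v^*)$, both are neighbours of $v^*$, so $u,v^*,w$ is a path of length two; as $u\not\sim w$ this is shortest and $\delta(u,w)=2$. Because $u,w$ are nonadjacent, Definition~\ref{def:RK} applies and yields $d(u)+d(w)+\delta(u,w)\geq n+1$, i.e. $d(u)+d(w)\geq n-1$. On the other hand $u,w\notin H$, so $d(u)<n/2$ and $d(w)<n/2$, giving $d(u)+d(w)<n$. When $n$ is even these two bounds already collide: $d(u),d(w)\leq n/2-1$ force $d(u)+d(w)\leq n-2<n-1$, contradicting the \rk\ bound, so $u\sim w$ and we are finished.

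The hard part will be the tight (odd $n$) case, and I expect this to be the main obstacle: there the inequalities only pin down $d(u)=d(w)=(n-1)/2$ with $\delta(u,w)=2$, which the single \rk\ inequality on the pair $(u,w)$ does not by itself exclude. To close this case I would bring in the global structure already established in this section rather than the lone pair. In particular, since $u,w\in A$, no neighbour of $u$ or of $w$ can lie in $C$ (an edge from $A$ to $C$ would place a $C$-vertex at distance $2$ from $v^*$), so every vertex of $C$ is a common non-neighbour of $u$ and $w$; combining this with $\hat{B}=\emptyset$ and the biclique description of $N(c)$ for $c\in C$ (Claim~\ref{claim:biclique}) lower-bounds $|N(u)\cap N(w)|$ and can be played against the forced equality $d(u)+d(w)=n-1$. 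This counting step — reconciling the tight degree equality with the common non-neighbours coming from $C$ — is where I expect the real work to concentrate. I would also anticipate needing to fix $v^*$ to be a \emph{minimum-degree} vertex, so that $d(v^*)\leq d(u),d(w)$ is available, since the bare inequality argument leaves genuine slack precisely in this boundary regime (most delicate when $C=\emptyset$); pinning down the correct choice of $v^*$ and exhausting this boundary case is the crux of the proof.
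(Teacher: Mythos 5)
Your even-$n$ argument is exactly the paper's proof: reduce to a non-adjacent pair $u,w\in\hat{A}$, observe $\delta(u,w)=2$ through $v^*$, apply Definition~\ref{def:RK} to get $d(u)+d(w)\geq n-1$, and play this against $d(u),d(w)<n/2$. The gap in your proposal is that the odd-$n$ case is never closed: you correctly isolate it, sketch a plan (common non-neighbours in $C$, Claim~\ref{claim:biclique}, choosing $v^*$ of minimum degree), but do not carry it out, so as it stands this is not a complete proof.

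However, your instinct about where the difficulty sits is not merely correct --- it is fatal, and the comparison with the paper is instructive. The paper's own proof commits precisely the integrality slip you refused to make: from $d(a_1)+d(a_2)\geq n-1$ it concludes that not both degrees can be strictly below $n/2$, an inference that is valid only for even $n$; for odd $n$ both degrees can equal $(n-1)/2$ and no contradiction arises. Worse, the claim as stated is actually false in that regime: the $5$-cycle is an \rk\ graph (every non-adjacent pair satisfies $d(u)+d(v)+\delta(u,v)=2+2+2=6=n+1$), every vertex has degree $2<n/2$, so $H=\emptyset$, and for any choice of $v^*$ the set $\hat{A}_+$ consists of three consecutive vertices of the cycle, which do not induce a clique. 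The same example shows your rescue plan cannot succeed without extra hypotheses: there $C=\emptyset$, $B$ contains vertices of degree $<n/2$ (so the paper's earlier claim that $\hat{B}=\emptyset$ also fails for odd $n$), and all degrees are equal, so neither the biclique structure nor a minimum-degree choice of $v^*$ provides leverage. In short, the ``crux'' you identified is not a hard case to be proved but a genuine counterexample regime; your proposal is incomplete, yet it correctly diagnoses a hole that the paper's one-line contradiction papers over.
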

\begin{proof}
    Let us assume towards a contradiction that there are two vertices $a_1, a_2 \in \hat{A}_+$ such that $\delta(a_1, a_2)=2$ (obviously, $\delta(a_1, a_2)\leq 2$, since $v^*$ is connected to all the vertices in $A$). Definition~\ref{def:RK} implies that $d(a_1)+d(a_2) \geq n+1-2=n-1$. Hence, it cannot be that both $a_1$ and $a_2$ have degrees that are strictly less than  $\frac{n}{2}$, as required. 
\end{proof}

\subsection{Spanning sub-graphs}

Before we describe how Algorithm~\ref{alg:hamalg} can be adapted to work for \rk\ graphs, we first describe how we modify the spanning trees (see Section~\ref{sec:spanningtrees}) to support paths in which not all vertices have a degree of at least $n/2$.

By the proof of Claim~\ref{clm.clm15} for any pair of vertices $u$, $v$ in $H$ it holds that $\delta(u, v) \leq 2$. Therefore it is possible to maintain for each path $P$ in the path cover a tree that spans the vertices in $P \cap H$ while each edge is participating in at most $2$ different trees.

In addition, Claim~\ref{claim:Aclique} and Claim~\ref{claim:biclique} imply that there is a spanning tree of depth $1$ for the vertices in $P \cap \hat{A}_+$ and a spanning tree of depth $1$ for the vertices in $P \cap \hat{C}$.

Therefore, the merges can be carried out just as before while having a constant factor blow-up in the round complexity. Specifically, the updates are performed in three phases, where there is a phase for the vertices of each one of the sets: $H$, $\hat{A}_+$, and $\hat{C}$.   

Regarding the global spanning tree, it suffices that it spans just the vertices in $H$. This is because it is used for pairing paths in which both endpoints are in $H$. Therefore, its construction remains unchanged.

\subsection{Adaptation of Algorithm~\ref{alg:hamalg} to \rk\ graphs}

In this section, we describe how Algorithm~\ref{alg:hamalg} can be adapted to 
work under the promise that the input graph satisfies the \rk\ condition.

As in Algorithm~\ref{alg:hamalg}, let $\mathcal{P}_i$ denote the path cover of the graph at the beginning of iteration  $i$.
Let $\mathcal{H}_i$ denote the paths in $\mathcal{P}_i$ for which both endpoints are in $H$.  
If $\mathcal{P}_i = \mathcal{H}_i$, then in iteration $i$, we can proceed as we did in Algorithm~\ref{alg:hamalg}.~\footnote{The correctness of this case follows from the proof of Lemma~\ref{lemma:leftdeg}. } 

Consequently, at the beginning of each iteration, our goal is to first handle the paths in $\mathcal{P}_i$ which are not in $\mathcal{H}_i$ and to get rid of them.
We merge these paths into paths where both endpoints are in $H$ (while leaving at most $2$ paths that do not have this property).

These merging operations are quite straightforward and are performed in two stages. In the first stage, we deal with paths with an endpoint in $\hat{A}_+$, as follows.  Given a list of endpoints in $\hat{A}_+$ and the ids of their corresponding paths, one can merge these paths greedily and obtain paths in which both endpoints are in $H\cup \hat{C}$ and at most a single path for which this property does not hold (namely, it has at least one endpoint is in $\hat{A}_+$).
In the second stage, we repeat the same merging process with respect to paths with endpoints in $\hat{C}$ and obtain a path cover in which for all paths, except for at most $2$ paths (i.e., one with an endpoint in $\hat{A}_+$ and one with an endpoint in $\hat{C}$), both endpoints are in $H$.

Since both $\hat{A}_+$ and $\hat{C}$ are cliques, it is not hard to see that these merges can be carried out in a constant number of rounds. 

\paragraph*{Termination of the algorithm.}
It remains to deal with the cases in which the path cover consists of two paths, one with an endpoint in $\hat{A}_+$ and one with an endpoint in $\hat{C}$.
Let $P_1$ and $P_2$ denote these paths, respectively, and $x_1$ and $x_2$ denote the respective endpoints.

The first case is when the other endpoint of $P_2$, denoted by $y_2$, is in $\hat{A}_+$. Since the subgraph induced on $\hat{A}_+$ forms a clique, in this case, we can merge $P_1$ and $P_2$ by a concatenation merge.

The second case is when the other endpoint of $P_1$, $y_1$, is in $B\cup C$. In this case, we can also concatenate $P_1$ and $P_2$ (by connecting $x_2$ to $y_1$).

Otherwise, we show that either we can merge $P_1$ and $P_2$ with an elementary merge or the subgraph induced on the vertices of $P_1$ contains a Hamiltonian cycle and likewise for $P_2$. Since the graph is connected, these cycles can be merged into a single Hamiltonian path.

We first consider $P_1$. If $y_1\in \hat{A}_+$, then we are done. Otherwise, it follows that $y_1\in A\cap H$. Therefore it holds that $\delta(x_1, y_1) = 2$, and so the sum of their degrees is at least $n-1$. It follows that either $P_1$ is cycled or there is an edge $e$ in $P_2$ such that $P_1$ can be merged to $P_2$ via $e$.

Similarly, in $P_2$, if $y_2 \in B\cup C$ then we are done. Otherwise, it follows that $y_2 \in A\cap H$. Therefore $\delta(x_2, y_2) = 2$, which implies that either $P_2$ is cycled or that there is an edge $e$ in $P_1$ such that $P_2$ can be merged to $P_1$ via $e$.

This completes the case analysis of the (end) cases of the termination of the algorithm. 

\end{document}